\DeclarePairedDelimiter{\ceil}{\lceil}{\rceil}
\let\mathbb=\mathds
\DeclareMathOperator{\Tr}{Tr}
\DeclareMathOperator{\e}{\mathrm{e}}
\newcommand{\be}{{\mathbf e}}
\newcommand{\tr}{\operatorname{Tr}}
\newcommand{\ten}{\otimes}
\newcommand{\pl}{\hspace{.1cm}}
\def\0{{\mathbf{0}}}
\def\1{{\mathbf{1}}}
\def\2{{\mathbf{2}}}
\def\3{{\mathbf{3}}}
\def\4{{\mathbf{4}}}
\def\5{{\mathbf{5}}}
\def\6{{\mathbf{6}}}
\def\7{{\mathbf{7}}}
\def\8{{\mathbf{8}}}
\def\9{{\mathbf{9}}}
\def\be{\begin{equation}}
\def\ee{\end{equation}}
\def\bea{\begin{eqnarray}}
\def\eea{\end{eqnarray}}
\def\eps{\varepsilon}
\theoremstyle{plain}
\newtheorem{theo}{Theorem} 
\newtheorem{prop}[theo]{Proposition} 
\newtheorem{lemm}[theo]{Lemma} 
\theoremstyle{definition}
\theoremstyle{remark}
\newtheorem{remark}{Remark}[section]
\numberwithin{equation}{section}
\newcommand{\opnorm}{\@ifstar\@opnorms\@opnorm}
\newcommand{\@opnorms}[1]{%
	$\left|\mkern-1.5mu\left|\mkern-1.5mu\left|
	#1
	\right|\mkern-1.5mu\right|\mkern-1.5mu\right|$
}
\newcommand{\@opnorm}[2][]{%
	\mathopen{#1|\mkern-1.5mu#1|\mkern-1.5mu#1|}
	#2
	\mathclose{#1|\mkern-1.5mu#1|\mkern-1.5mu#1|}
}
\begin{document}

\let\origmaketitle\maketitle
\def\maketitle{
	\begingroup
	\def\uppercasenonmath##1{} 
	\let\MakeUppercase\relax 
	\origmaketitle
	\endgroup
}

\title{\bfseries \Large{ 
Optimal Second-Order Rates for \\ Quantum Soft Covering and Privacy Amplification
		}}

\author{ \normalsize \textsc{Yu-Chen Shen$^{1}$, Li Gao$^{5}$, and Hao-Chung Cheng$^{1,2,3,4}$}}
\address{\small  	
$^{1}$Department of Electrical Engineering and Graduate Institute of Communication Engineering,\\ National Taiwan University, Taipei 106, Taiwan (R.O.C.)\\
$^{2}$Department of Mathematics, National Taiwan University\\
$^{3}$Center for Quantum Science and Engineering,  National Taiwan University\\
$^{4}$Hon Hai (Foxconn) Quantum Computing Center, New Taipei City 236, Taiwan\\
$^{5}$Department of Mathematics, University of Houston, Houston, TX 77204, USA\\
}

\email{\href{mailto:haochung.ch@gmail.com}{haochung.ch@gmail.com}}

\date{\today}
\begin{abstract}
We study quantum soft covering and privacy amplification against quantum side information. The former task aims to approximate a quantum state by sampling from a prior distribution and querying a quantum channel. The latter task aims to extract uniform and independent randomness against quantum adversaries. For both tasks, we use trace distance to measure the closeness between the processed state and the ideal target state.
We show that the minimal amount of samples for achieving an $\varepsilon$-covering is given by the $(1-\varepsilon)$-hypothesis testing information (with additional logarithmic additive terms), while the maximal extractable randomness for an $\varepsilon$-secret extractor is characterized by the conditional $(1-\varepsilon)$-hypothesis testing entropy.

When performing independent and identical repetitions of the tasks, our one-shot characterizations lead to tight asymptotic expansions of the above-mentioned operational quantities. We establish their second-order rates given by the quantum mutual information variance and the quantum conditional information variance, respectively.
Moreover, our results extend to the moderate deviation regime, which are the optimal asymptotic rates when the trace distances vanish at sub-exponential speed.
Our proof technique is direct analysis of trace distance without smoothing.

\end{abstract}
\maketitle

\section{Introduction} \label{sec:introduction}

Questing the optimal rates for information-processing tasks is a core problem in classical and quantum information theory \cite{Sha48, Wol57, Hol73, SW97, Hol98, ON99, Win99, Win99b, HN03}. Nowadays, considerable research focus has shifted from the first-order characterization of the optimal rates to the \emph{second-order} quantities in the asymptotic expansions of the optimal rates in coding blocklengths \cite{Str62, Hay09b, PPV10, Tan14, Li14, TH13, WWY14, TV15, DTW16, Tom16, TBR16}.
Such second-order terms quantifying how much extra cost one has to pay in non-asymptotic scenarios are of significant importance both in theory and practice.

Indeed, much progress has been made in deriving the exact second-order rates for numerous quantum information-theoretic protocols. Nevertheless, this problem remains open for certain tasks such as \emph{privacy amplification against quantum side information} (or called \emph{randomness extraction}) \cite{Ren05, TH13}, where the operational quantities usually being used as a security criterion is the trace distance \cite{Can01, Ren05, RK05, Unr10, Tom16, ABJ+20, Dup21, SGC22a}. This manifests the fact that existing analysis on operational quantities in terms of the trace distance as in various quantum information-theoretic tasks \cite{NC09, Wilde2, Tom16, Hay17, Wat18, KW20} still has room for improvement. Hence, this problem will be the main focus of this work. Moreover, we hope the proposed analysis on the trace distance would shed new lights on the \emph{one-shot quantum information theory} \cite{Tom16}.

In this paper, we study two tasks. The first task is privacy amplification against quantum side information \cite{Ren05, TSS+11, Hay13, PR14, Hay16, SIAM17, Tom16, Dup21}.
Suppose that a classical source $X$ at Alice's disposal may be correlated with a quantum adversary Eve, which can be modelled as a joint  classical-quantum (c-q) state $\rho_{XE}$. The goal of Alice is to extract from $X$  as much uniform randomness as possible that is independent of Eve. Due to operational motivation of composability \cite{Can01, Ren05, Unr10, Hay13, Tom16}, the \emph{trace distance} is usually adopted as the security criterion to measure how far the extracted state is from the perfectly uniform and independent randomness, i.e.~
\begin{align}
\Delta(X{\,|\,}E)_\rho :=    \frac12 \mathds{E}_h \left\| \mathcal{R}^h\left( \rho_{XE} \right) - \frac{\mathds{1}_Z}{|\mathcal{Z}|} \otimes \rho_E \right\|_1,
\end{align}
where $\mathcal{R}^h$ denotes the random hash function applied by Alice, and $\|\cdot\|_1$ is the Schatten $1$-norm.
A randomness extractor satisfying $\Delta(X{\,|\,}E)_\rho\leq \varepsilon$ is then said to be \emph{$\varepsilon$-secret}.
We define $\ell^\varepsilon(X{\,|\,}E)_\rho$ as the \emph{maximal extractable randomness} $|\mathcal{Z}|$ for $\varepsilon$-secret randomness extractors.

The second task studied in this work is \emph{quantum soft covering} \cite{CG22}. Consider a c-q state $\rho_{XB}:= \sum_{x\in\mathcal{X}} p_X(x)|x\rangle \langle x|\otimes \rho_B^x$. 
The goal of quantum soft covering is to approximate the marginal state $\rho_B$ using a random codebook $\mathcal{C}$ with certain size $|\mathcal{C}|$.
Here, the codewords in $\mathcal{C}$ are independently sampled from the distribution $p_X$; through the c-q channel $x\mapsto \rho_B^x$, one may construct an approximation state $\frac{1}{|\mathcal{C}|}\sum_{x\in\mathcal{C}} \rho_B^x$ to accomplish the goal.
Again, the trace distance is used as the figure of merit to measure closeness, i.e.~
\begin{align}
    \Delta(X{\,:\,}B)_\rho := \frac12 \mathds{E}_{\mathcal{C}} \left\| \frac{1}{|\mathcal{C}|}\sum_{x\in\mathcal{C}} \rho_B^x - \rho_B \right\|_1.
\end{align}
We say that the random codebook $\mathcal{C}$ achieves an \emph{$\varepsilon$-covering} if it satisfies $\Delta(X{\,:\,}B)_\rho \leq \varepsilon$.
We then define the \emph{$\varepsilon$-covering number} $M^\varepsilon(X{\,:\,}E)_\rho$ as the \emph{minimum random codebook size} $|\mathcal{C}|$ to realize an \emph{$\varepsilon$-covering}.

Our main result is to provide \emph{one-shot characterizations} for both the operational quantities $\ell^\varepsilon(X{\,|\,}E)_\rho$ and $M^\varepsilon(X{\,:\,}E)_\rho$.
For privacy amplification, we show that for every $0<\varepsilon<1$, the maximal extractable randomness using strongly $2$-universal hash functions is characterized by the \emph{$(1-\varepsilon)$-conditional hypothesis testing entropy} $H_\textnormal{h}^{1-\varepsilon}(X{\,|\,}E)_\rho$ \cite{TH13}:
\begin{align}
    \log \ell^\varepsilon(X{\,|\,}E)_\rho \approx H_\textnormal{h}^{1-\varepsilon\pm \delta}(X{\,|\,}E)_\rho\,.
\end{align}
Here, ``$\approx$'' means equality up to some logarithmic additive terms, and $\delta$ is a parameter that be chosen for optimization (see Theorem~\ref{theo:one-shot_second-order_PA} for the precise statement, and see Section~\ref{sec:notation} for detailed definitions).
With a similar flavor, we prove that for every $0<\varepsilon<1$, the minimal random codebook size for quantum soft covering is characterized by the \emph{$(1-\varepsilon)$-conditional hypothesis testing information}: $I_\textnormal{h}^{1-\varepsilon}(X{\,:\,}B)_\rho$  \cite{MW14}:
\begin{align}
    \log M^\varepsilon(X{\,:\,}B)_\rho \approx I_\textnormal{h}^{1-\varepsilon\pm \delta}(X{\,:\,}B)_\rho
\end{align}
(see Theorem~\ref{theo:one-shot_second-order_covering} for the precise statement).
Contrary to the previous studies, the established one-shot entropic characterizations established in this work are not smoothed min- and max-entropies \cite{Ren05, KRS09, CBR14, TCR10, TH13, Tsurumaru2021EquivalenceOT}.

In the scenario that the underlying states are identical and independently distributed, the established one-shot characterizations lead to the following second-order asymptotic expansions of the optimal rates, respectively (Propositions~\ref{prop:second-order_PA} and \ref{prop:second-order_covering}):
\begin{align}
    \log \ell^{\varepsilon}(X^n {\,|\,} E^n)_{\rho^{\otimes n}} &= nH(X{\,|\,}E)_{\rho} + \sqrt{nV(X{\,|\,}E)_{\rho}}\,\Phi^{-1}(\varepsilon) + O\left(\log n\right); \\
    \log M^\varepsilon(X^n{\,:\,}B^n)_{\rho^{\otimes n}} &= n I(X{\,:\,}B)_\rho - \sqrt{n V(X{\,:\,}B) }\, \Phi^{-1}(\varepsilon) + O\left(\log n\right).
\end{align}
Here, the first-order terms are the conditional quantum entropy $H(X{\,|\,}E)_{\rho}$ and the quantum mutual information $I(X{\,:\,}B)_\rho$, whereas the second-order rates that can be expressed as the quantum conditional information variance $V(X{\,|\,}E)$ and the quantum mutual information variance $V(X{\,:\,}B)$; and $\Phi^{-1}$ is the inverse of the cumulative normal distribution.

Furthermore, our results extend to the \emph{moderate deviation regime} \cite{CH17, CTT2017}. Namely, we derive both the optimal rates when the trace distances approach zero sub-exponentially (Propositions~\ref{prop:moderate_PA} and \ref{prop:moderate_covering}):
\begin{align}
    \frac1n \log \ell^{\varepsilon_n} (X^n{\,|\,}E^n)_{\rho^{\otimes n}} &= H(X{\,|\,}E)_\rho - \sqrt{2 V(X{\,|\,}E) }\, a_n + o\left(a_n\right); \\
    \frac1n \log M^{\varepsilon_n} (X^n{\,:\,}B^n)_{\rho^{\otimes n}} &= I(X{\,:\,}B)_\rho + \sqrt{2 V(X{\,:\,}B) }\, a_n + o\left(a_n\right).
\end{align}
Here, $(a_n)_{n\in\mathds{N}}$ is any moderate sequence satisfying $a_n\to 0$ and $n a_n^2 \to \infty$; and $\varepsilon_n := \mathrm{e}^{-n a_n^2}\to 0$\,.

\medskip
The rest of the paper is organized as follows.
In Section~\ref{sec:notation}, we introduce notations and auxiliary lemmas that will be used in our derivations.
Section~\ref{sec:PA} presents the one-shot and second-order characterizations of privacy amplification against quantum side information and 
Section~\ref{sec:covering} is devoted to quantum soft covering and its one-shot and second-order characterizations. We conclude our paper in Section~\ref{sec:conclusions}.

\section{Notation and Auxiliary Lemmas} \label{sec:notation}
For an integer $M\in\mathds{N}$, we denote $[M]:= \{1,\ldots, M\}$. We use `$\wedge$' to indicate `minimum value' between two scalars or the conjunction `and' between two statements. 
We use $\mathds{1}_{A}$ to denote the indicator function for a condition $A$.
The density operators considered in this paper are positive semi-definite operators with unit trace. 
For a trace-class operator $H$, the trace class norm (also called Schatten-$1$ norm) is defined by
\begin{align}
    \|H\|_1 := \Tr\left[ \sqrt{ H^\dagger H } \right].
\end{align}
For positive semi-definite operator $K$ and positive operator $L$, we use the following short notation for the \emph{noncommutative quotient}.
\begin{align} \label{eq:quotient}
\frac{K}{L}:= L^{-\frac{1}{2}} K L^{-\frac{1}{2}}\,.
\end{align} 
We use $\mathds{E}_{x\sim p_X}$ to stand for taking expectation where the underlying random variable is $x$ with probability distribution $p_X$ (with finite support), e.g.~$\rho_{XB} = \sum_{x\in\mathcal{X}} p_X(x)|x\rangle \langle x| \otimes \rho_B^x \equiv \mathds{E}_{x\sim p_X}\left( |x\rangle \langle x| \otimes \rho_B^x\right)$.

For density operators $\rho$ and $\sigma$, we define the $\varepsilon$-\emph{information spectrum divergence} \cite{HN03, NH07} as
\begin{align}
	D_\text{s}^{\varepsilon}(\rho \parallel \sigma) &:= \sup_{c\in\mathds{R}} \left\{ \log c : \Tr\left[ \rho \left\{ \rho \leq c \sigma \right\} \right] \leq \varepsilon  \right\}, \label{eq:Ds}
\end{align}
and the $\varepsilon$-\emph{hypothesis testing divergence} \cite{TH13, WR13, Li14} as
\begin{align} \label{eq:Dh}
    D_\text{h}^\varepsilon(\rho \,\|\, \sigma) := \sup_{0\leq T\leq \mathds{1} } \left\{ -\log \Tr[\sigma T] : \Tr[\rho T] \geq 1 - \varepsilon \right\}.
\end{align}
For positive semi-definite operator $\rho$ and positive definite operator $\sigma$, we define the \emph{collision divergence}\footnote{Note that for unnormalized $\rho$, the collision divergence is usually defined as $\log \Tr[ ( \sigma^{-\sfrac{1}{4}} \rho  \sigma^{-\sfrac{1}{4}} )^2 ] - \log \Tr[\rho]$. However, we do not use such a definition for notational convenience. 
Indeed, our derivations will rely on the joint convexity of $\exp D_2^*$ (see Lemma~\ref{lemm:joint_convexity}) which holds for positive semi-definite operator $\rho$ and the definition given in \eqref{eq:sand} as well.} \cite{Ren05} as
\begin{align} \label{eq:sand}
    D^*_2(\rho \,\|\, \sigma) :=  \log \Tr\left[ \left( \sigma^{-\frac{1}{4}} \rho  \sigma^{-\frac{1}{4}} \right)^2 \right].
\end{align}
The \emph{quantum relative entropy} \cite{Ume62, HP91} and \emph{quantum relative entropy variance} \cite{TH13,Li14} for density operator $\rho$ and positive definite operator $\sigma$ are defined as
\begin{align}
    D(\rho \,\|\,\sigma) &:= \Tr\left[ \rho \left( \log \rho - \log \sigma \right) \right];\\
    V(\rho \,\|\,\sigma) &:= \Tr\left[ \rho \left( \log \rho - \log \sigma \right)^2 \right] - \left( D(\rho \,\|\,\sigma) \right)^2.
\end{align}

By a classical-quantum state $\rho_{XB} = \sum_{x\in\mathcal{X}} p_X(x) |x\rangle\langle x|\otimes \rho_B^x$, we mean $p_X$ is a probability mass function on $\mathcal{X}$ and each $\rho_B^x$ is a density operator on system $B$.
We define the $\varepsilon$-\emph{hypothesis testing information} \cite{MW14} and the \emph{conditional $\varepsilon$-hypothesis testing entropy} \cite{TH13}\footnote{Note that we did not optimize the second argument but instead just put $\rho_B$; hence our definitions are slightly different from that proposed in Refs.~\cite{TH13, MW14}}, respectively, as
\begin{align}  \label{eq:one-shot_entropy}
    I_\text{h}^\varepsilon(X{\,:\,}B)_\rho &:= D_\text{h}^\varepsilon\left(\rho_{XB} \,\|\,\rho_X\otimes \rho_B\right); \quad
    H_\text{h}^\varepsilon(X{\,|\,}B)_\rho := -D_\text{h}^\varepsilon\left(\rho_{XB} \,\|\,\mathds{1}_X\otimes \rho_B\right).
\end{align}
We note that both the quantities $I_\text{h}^\varepsilon(X{\,:\,}B)_\rho$ and $H_\text{h}^\varepsilon(X{\,|\,}B)_\rho$ can be formulated as a semi-definite optimization problem \cite{TH13}, \cite[\S 1]{Wat18}.
The \emph{quantum mutual information}, \emph{quantum conditional entropy}, \emph{quantum mutual information variance}, and the \emph{quantum conditional information variance} of $\rho_{XB}$ are defined, respectively as
\begin{align}
    I(X{\,:\,}B)_\rho &:= D(\rho_{XB} \,\|\,\rho_X\otimes \rho_B)\,; \quad 
    H(X{\,|\,} B)_\rho := -D(\rho_{XB} \,\|\,\mathds{1}_X\otimes \rho_B)\,; \\
    V(X{\,:\,}B)_\rho &:= V(\rho_{XB} \,\|\,\rho_X\otimes \rho_B)\,; \quad
    V(X{\,|\,} B)_\rho := V(\rho_{XB} \,\|\,\mathds{1}_X\otimes \rho_B)\,.
\end{align}

For any self-adjoint operator $H$ with eigenvalue decomposition $H = \sum_i \lambda_i |e_i\rangle \langle e_i|$,
 we define the set $\textnormal{\texttt{spec}}(H):= \{\lambda_i\}_i$ to be the eigenvalues of $H$, and $|\textnormal{\texttt{spec}}(H)|$ to be the number of distinct eigenvalues of $H$.
We define the \emph{pinching map} with respect to $H$ as
\begin{align}
	\mathscr{P}_H[L] : L\mapsto \sum_{i=1} e_i L e_i\, ,
\end{align}
where $e_i$ is the spectrum projection onto $i$th distinct eigenvalues.
\begin{lemm}[{Pinching inequality} {\cite{Hay02}}] \label{lemm:pinching}
For every $d$-dimensional self-adjoint operator $H$ and positive semi-definite operator $L$,
\begin{align} \label{eq:pinching_inequality}
		\mathscr{P}_H[L] \geq \frac{1}{|\textnormal{\texttt{spec}}(H)|} L\,.
\end{align}
Moreover, it holds that for every $n\in\mathds{N}$, 
\begin{align}
\left|\textnormal{\texttt{spec}}\left(H^{\otimes n}\right)\right| \leq (n+1)^{d-1}\,.
\end{align}
\end{lemm}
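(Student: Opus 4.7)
The plan is to prove the two parts separately. For the pinching inequality itself, I would invoke the standard unitary-averaging argument. Let $N := |\textnormal{\texttt{spec}}(H)|$ and let $e_1, \ldots, e_N$ denote the orthogonal spectral projections of $H$, which satisfy $e_j e_k = \delta_{jk} e_j$ and $\sum_k e_k = \mathds{1}$. Set $\omega := \mathrm{e}^{2\pi \mathrm{i}/N}$ and define the unitary $U := \sum_{k=1}^N \omega^k e_k$, whose powers are $U^j = \sum_{k=1}^N \omega^{jk} e_k$. A direct calculation yields
\begin{align}
\frac{1}{N} \sum_{j=0}^{N-1} U^j L (U^j)^\dagger = \sum_{k,l=1}^N \left( \frac{1}{N} \sum_{j=0}^{N-1} \omega^{j(k-l)} \right) e_k L e_l = \sum_{k=1}^N e_k L e_k = \mathscr{P}_H[L],
\end{align}
because the inner geometric sum equals $1$ when $k=l$ and vanishes otherwise. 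Isolating the $j=0$ term then rearranges to $N\mathscr{P}_H[L] = L + \sum_{j=1}^{N-1} U^j L (U^j)^\dagger$. When $L \geq 0$, each summand on the right is positive semi-definite, and dropping them gives $\mathscr{P}_H[L] \geq L/N$.

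For the spectral counting estimate, the strategy is elementary multiset combinatorics. If $H$ has distinct eigenvalues $\mu_1, \ldots, \mu_m$ with $m \leq d$, then every eigenvalue of $H^{\otimes n}$ has the form $\prod_{i=1}^m \mu_i^{k_i}$ for some non-negative multiplicity tuple $(k_1, \ldots, k_m)$ with $\sum_i k_i = n$. Therefore
\begin{align}
|\textnormal{\texttt{spec}}(H^{\otimes n})| \leq \#\{(k_1, \ldots, k_m) \in \mathds{N}_0^m : k_1 + \cdots + k_m = n\}.
\end{align}
Since $(k_1, \ldots, k_{m-1})$ each range over the $n+1$ values $\{0, 1, \ldots, n\}$ and $k_m$ is determined by the sum constraint, this count is at most $(n+1)^{m-1} \leq (n+1)^{d-1}$.

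Neither step presents a genuine obstacle: the first is a clean algebraic identity driven by the orthogonality of the $e_k$ together with the roots-of-unity filter, and the second is a straightforward counting argument. The only mild point worth noting is that potentially $m < d$, and that distinct tuples can produce identical products — both observations only tighten the stated bound.
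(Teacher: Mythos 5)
The paper does not supply its own proof of this lemma; it is quoted from Hayashi's work and used as a black box. Your argument is correct and is in fact the standard one: the roots-of-unity averaging identity $\mathscr{P}_H[L] = \frac{1}{N}\sum_{j=0}^{N-1} U^j L (U^j)^\dagger$ with $U = \sum_k \omega^k e_k$ is precisely the device in Hayashi's original proof, and isolating the $j=0$ term and discarding the remaining positive semi-definite summands gives the operator inequality. The type-counting estimate is likewise the standard one; your remark that a priori $m \le d$ and that distinct exponent tuples may collide only tightens the bound, so nothing is lost. One small point worth flagging, though it does not affect correctness: you write $k \equiv l \pmod N$ reduces to $k = l$, which is valid because $k, l \in \{1,\dots,N\}$ forces $|k-l| \le N-1$; it is good that you implicitly rely on this, since otherwise the roots-of-unity filter would not collapse to the Kronecker delta. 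The proposal is complete and matches the intended argument.
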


We have the following relation between divergences that will be used in our proofs.
\begin{lemm}[Relation between divergences {\cite[Lemma~12, Proposition~13, Theorem~14]{TH13}}] \label{lemm:relation}
For every density operator $\rho$, positive semi-definite operator $\sigma$, $0< \varepsilon < 1$, and $0 < \delta < 1-\varepsilon$, we have
\begin{align}
    D_\textnormal{s}^{\varepsilon - \delta}\left( \mathscr{P}_{\sigma}[\rho] \,\|\,\sigma \right) &\leq D_\textnormal{h}^{\varepsilon + \delta}(\rho \,\|\,\sigma) + \log |\textnormal{\texttt{spec}}(\sigma)| + 2 \log \delta\,; \\
    D_\textnormal{h}^\varepsilon(\rho \,\|\,\sigma) &\leq D_\textnormal{s}^{\varepsilon + \delta} (\rho \,\|\,\sigma) - \log \delta\,.
\end{align}
\end{lemm}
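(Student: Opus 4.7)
The lemma consists of two inequalities that go in opposite directions between the hypothesis testing divergence $D_\textnormal{h}^\varepsilon$ and the information spectrum divergence $D_\textnormal{s}^\varepsilon$; I will sketch each separately.

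\textbf{Second inequality ($D_\textnormal{h}^\varepsilon(\rho\|\sigma) \leq D_\textnormal{s}^{\varepsilon+\delta}(\rho\|\sigma) - \log\delta$).} The plan is to produce a near-optimal hypothesis-testing test directly from the information-spectrum threshold. Setting $R := D_\textnormal{s}^{\varepsilon+\delta}(\rho\|\sigma)$ and $P := \{\rho > \e^R \sigma\}$ (the spectral projection onto the positive part of $\rho - \e^R\sigma$), the definition of $D_\textnormal{s}^{\varepsilon+\delta}$ gives $\Tr[\rho P^\perp] \leq \varepsilon + \delta$, while $P(\rho - \e^R\sigma)P \geq 0$ yields $\Tr[\sigma P] \leq \e^{-R}\Tr[\rho P] \leq \e^{-R}$. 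The projection $P$ falls short of being a valid $\varepsilon$-test by only $\delta$ in $\rho$-mass; I would recover this gap by enlarging $P$ to $T := P + \lambda(\mathds{1}-P)$ with $\lambda$ chosen so that $\Tr[\rho T] \geq 1-\varepsilon$, at the cost of an extra factor of at most $\delta^{-1}$ in $\Tr[\sigma T]$. Taking $-\log$ then produces the stated bound.

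\textbf{First inequality ($D_\textnormal{s}^{\varepsilon-\delta}(\mathscr{P}_\sigma[\rho]\|\sigma) \leq D_\textnormal{h}^{\varepsilon+\delta}(\rho\|\sigma) + \log|\textnormal{\texttt{spec}}(\sigma)| + 2\log\delta$).} The key feature is that $\tilde\rho := \mathscr{P}_\sigma[\rho]$ commutes with $\sigma$, so the spectral projections $\{\tilde\rho \leq c\sigma\}$ are genuine joint projections in the common eigenbasis. My plan:
\begin{enumerate}
    \item Start with a near-optimal test $T$ for $D_\textnormal{h}^{\varepsilon+\delta}(\rho\|\sigma) =: R$, so that $\Tr[\rho T] \geq 1-\varepsilon-\delta$ and $\Tr[\sigma T] \leq \e^{-R}$.
    \item Pinch the test: $\tilde T := \mathscr{P}_\sigma[T]$. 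Since pinching is unital CPTP, $0 \leq \tilde T \leq \mathds{1}$; since $\mathscr{P}_\sigma[\sigma] = \sigma$ and the pinching is self-adjoint, $\Tr[\sigma\tilde T] = \Tr[\sigma T] \leq \e^{-R}$, and $\Tr[\tilde\rho \tilde T] = \Tr[\rho \tilde T]$ by self-adjointness. Moreover $\tilde T$ commutes with $\sigma$ and hence with $\tilde\rho$.
    \item Invoke the pinching inequality (Lemma~\ref{lemm:pinching}) to transfer between $\Tr[\rho \tilde T]$ and $\Tr[\rho T]$ at the multiplicative cost $|\textnormal{\texttt{spec}}(\sigma)|$; this is what forces the $\log|\textnormal{\texttt{spec}}(\sigma)|$ term.
    \item In the commuting regime of $\tilde\rho, \sigma, \tilde T$, the threshold projection $Q := \{\tilde\rho > c\sigma\}$ is a well-behaved operator, and the classical Markov-type inequality $\Tr[\sigma Q] \leq c^{-1}\Tr[\tilde\rho Q]$ applies. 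Choosing $c = \e^R |\textnormal{\texttt{spec}}(\sigma)|/\delta^2$ and comparing $\tilde T$ with $Q$ (using the smoothing gap $\delta$ twice — once to absorb the $\rho$-mass loss from pinching and once for the Markov step) produces $\Tr[\tilde\rho \{\tilde\rho \leq c\sigma\}] \leq \varepsilon - \delta$, i.e.\ the claimed upper bound on $D_\textnormal{s}^{\varepsilon-\delta}(\tilde\rho\|\sigma)$.
\end{enumerate}

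The main obstacle I anticipate is the careful tracking of the smoothing parameter $\delta$ across both the pinching degradation and the Markov comparison: the pinching converts a good test for $\rho$ into a weaker test for $\tilde\rho$ (losing a factor of $|\textnormal{\texttt{spec}}(\sigma)|$), while the Markov step needs $\delta$-gaps on both sides to extract a usable threshold — this is precisely why the two $\log\delta$ factors and the single $\log|\textnormal{\texttt{spec}}(\sigma)|$ factor appear together. Most of this bookkeeping already appears in the proof of Lemma~12 of \cite{TH13}; the quantum adaptation merely threads the pinching channel $\mathscr{P}_\sigma$ through the commutative argument.
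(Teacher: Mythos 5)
The paper does not prove this lemma; it is imported verbatim from Tomamichel--Hayashi~\cite{TH13}, so the comparison is against what the correct argument would be. Against that standard, your proposal contains a logical error in \emph{both} parts: you have the direction of the argument reversed.

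For the second inequality, $D_\textnormal{h}^\varepsilon(\rho\|\sigma)\leq D_\textnormal{s}^{\varepsilon+\delta}(\rho\|\sigma)-\log\delta$, note that $D_\textnormal{h}^\varepsilon$ is a supremum over tests. Exhibiting a specific test $T$ (as you propose, via the threshold projection $P$ enlarged to $T=P+\lambda(\mathds{1}-P)$) only certifies a \emph{lower} bound $D_\textnormal{h}^\varepsilon\geq -\log\Tr[\sigma T]$, which is the opposite of what must be shown. The correct argument is a converse: for \emph{every} admissible test $T$ with $\Tr[\rho T]\geq 1-\varepsilon$, one applies the quantum Neyman--Pearson operator inequality $\Tr[\rho T]-c\Tr[\sigma T]\leq\Tr\bigl[(\rho-c\sigma)_+\bigr]$, with $c$ slightly above $\exp D_\textnormal{s}^{\varepsilon+\delta}(\rho\|\sigma)$ so that $\Tr[(\rho-c\sigma)_+]\leq\Tr[\rho\{\rho>c\sigma\}]<1-\varepsilon-\delta$. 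This yields $c\Tr[\sigma T]\geq\delta$, hence $-\log\Tr[\sigma T]\leq\log c-\log\delta$ for all such $T$, and taking the supremum gives the claim. As a secondary issue, even read as a lower bound your enlargement step fails quantitatively: $\Tr[\sigma T]=\Tr[\sigma P]+\lambda\Tr[\sigma(\mathds{1}-P)]\leq \e^{-R}+\lambda\Tr[\sigma]$, which is an \emph{additive} (not multiplicative $\delta^{-1}$) degradation, so the stated $-\log\delta$ term does not emerge from it.

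For the first inequality, the same reversal occurs. Since $D_\textnormal{s}^{\varepsilon-\delta}(\mathscr{P}_\sigma[\rho]\|\sigma)$ is defined as $\sup\{\log c:\Tr[\mathscr{P}_\sigma[\rho]\{\mathscr{P}_\sigma[\rho]\leq c\sigma\}]\leq\varepsilon-\delta\}$, exhibiting a particular $c$ for which the trace is $\leq\varepsilon-\delta$ (your stated conclusion in step~4) certifies a \emph{lower} bound on $D_\textnormal{s}^{\varepsilon-\delta}$. To prove the claimed \emph{upper} bound you must instead show that for every $c$ exceeding the threshold the constraint is \emph{violated}, i.e.\ $\Tr[\mathscr{P}_\sigma[\rho]\{\mathscr{P}_\sigma[\rho]\leq c\sigma\}]>\varepsilon-\delta$. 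The pinching ingredients you list (self-adjointness of $\mathscr{P}_\sigma$, commutation of $\mathscr{P}_\sigma[\rho]$ with $\sigma$, and the pinching inequality contributing the $\log|\textnormal{\texttt{spec}}(\sigma)|$ term) are indeed the right tools, but the argument has to be threaded in the contrapositive direction: assume $c$ is large, use that $Q=\{\mathscr{P}_\sigma[\rho]>c\sigma\}$ is a genuine test commuting with $\sigma$ with $\Tr[\sigma Q]\leq c^{-1}\Tr[\mathscr{P}_\sigma[\rho]Q]$ and $\Tr[\mathscr{P}_\sigma[\rho]Q]=\Tr[\rho Q]$, and deduce that $\Tr[\rho Q]$ (equivalently $\Tr[\mathscr{P}_\sigma[\rho]Q]$) cannot be too large. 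You should also double-check the sign in the stated lemma: the displayed first inequality has $+2\log\delta$, but the way the lemma is invoked in the proof of Theorem~\ref{theo:one-shot_second-order_covering} (``$+\log|\textnormal{\texttt{spec}}(\rho_B)|-2\log\delta$'') shows the intended term is $-2\log\delta = \log(1/\delta^2)$; your chosen $c=\e^R\nu/\delta^2$ already implicitly uses the corrected sign.
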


\begin{lemm}[Lower bound on the collision divergence {\cite[Theorem~3]{BEI17}}] \label{lemm:D2toDs}
For every $0<\eta<1$ and $\lambda_1, \lambda_2>0$, density operator $\rho$, and positive semi-definite $\sigma$, we have\footnote{Ref.~\cite[Thm.~3]{BEI17} is stated for $\lambda_1= \lambda \in (0,1)$ and $\lambda_2 = 1-\lambda$. We remark that the result applies to the case $\lambda_1, \lambda_2>0$ by following the same proof.}
\begin{align}
    \exp D_2^*\left( \rho  \,\|\, \lambda_1 \rho + \lambda_2 \sigma \right) \geq \frac{1-\eta}{\lambda_1 + \lambda_2 \cdot \mathrm{e}^{ - D_\text{s}^\eta (\rho  \,\|\, \sigma )}  }\,.
\end{align}
\end{lemm}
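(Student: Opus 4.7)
The plan is to unfold $\exp D_2^*(\rho \,\|\, \tau)$ with $\tau := \lambda_1 \rho + \lambda_2 \sigma$ into a Hilbert--Schmidt squared norm and lower bound it by Cauchy--Schwarz, cleverly choosing a test operator built from the spectral projector that realizes $D_\textnormal{s}^\eta$. More precisely, writing $X := \tau^{-1/4}\rho \tau^{-1/4}$, so that $\exp D_2^*(\rho\,\|\,\tau) = \Tr[X^2] = \|X\|_2^2$, the reverse-Cauchy--Schwarz inequality
\begin{align}
\|X\|_2^2 \;\geq\; \frac{\bigl(\Tr[XB]\bigr)^2}{\|B\|_2^2}
\end{align}
holds for any self-adjoint $B$. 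The key choice will be $B := \tau^{1/4} Q \tau^{1/4}$, where $Q$ is an appropriate spectral projector coming from $D_\textnormal{s}^\eta(\rho\,\|\,\sigma)$. By cyclicity of the trace, $\Tr[XB] = \Tr[\rho Q]$, and $\|B\|_2^2 = \Tr[\tau^{1/2} Q \tau^{1/2} Q] \leq \Tr[\tau Q]$ using $Q \leq \mathds{1}$.

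Next, I would pick $Q$ to be the spectral projector $\{\rho > c \sigma\}$ for an arbitrary $c$ strictly smaller than $\exp D_\textnormal{s}^\eta(\rho\,\|\,\sigma)$. By definition of $D_\textnormal{s}^\eta$ as a supremum, one has $\Tr[\rho\{\rho \leq c\sigma\}] \leq \eta$, hence
\begin{align}
\Tr[\rho Q] \;\geq\; 1 - \eta.
\end{align}
Moreover, on the range of $Q$ the operator inequality $Q(\rho - c\sigma)Q \geq 0$ holds, so $\Tr[\sigma Q] \leq c^{-1} \Tr[\rho Q]$. Plugging this into the linear decomposition of $\tau$ gives
\begin{align}
\Tr[\tau Q] \;=\; \lambda_1 \Tr[\rho Q] + \lambda_2 \Tr[\sigma Q] \;\leq\; \bigl(\lambda_1 + \lambda_2 c^{-1}\bigr)\, \Tr[\rho Q].
\end{align}

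Combining the above steps and dividing out one factor of $\Tr[\rho Q]$ yields
\begin{align}
\exp D_2^*(\rho \,\|\, \tau) \;\geq\; \frac{\bigl(\Tr[\rho Q]\bigr)^2}{\bigl(\lambda_1 + \lambda_2 c^{-1}\bigr)\Tr[\rho Q]} \;\geq\; \frac{1-\eta}{\lambda_1 + \lambda_2 c^{-1}},
\end{align}
and taking the supremum over $c < \exp D_\textnormal{s}^\eta(\rho\,\|\,\sigma)$ gives the advertised bound.

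The only genuinely delicate step is the Cauchy--Schwarz move together with the choice $B = \tau^{1/4} Q \tau^{1/4}$: one has to verify that the sandwich simultaneously produces the two trace quantities $\Tr[\rho Q]$ (top) and $\Tr[\tau Q]$ (bottom) that interact correctly with the spectral inequality on the range of $Q$. Care is also needed if $\sigma$ is not strictly positive, in which case $\tau$ itself can be singular; one handles this by restricting to $\supp(\tau)$ throughout, or by a small perturbation $\sigma + \delta \mathds{1}$ followed by $\delta \downarrow 0$, using lower semicontinuity of $D_2^*$ in the second argument. Everything else in the argument is a straightforward use of trace cyclicity and operator monotonicity.
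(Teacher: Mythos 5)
Your proof is correct, and it is worth noting that the paper itself does not prove this lemma but simply imports it from [BEI17, Theorem~3]; so you have supplied the argument from scratch. The engine of your argument --- reverse Cauchy--Schwarz in the Hilbert--Schmidt inner product with the test operator $B=\tau^{1/4}Q\tau^{1/4}$, where $Q=\{\rho>c\sigma\}$ is the Neyman--Pearson projector --- is exactly the standard method for lower-bounding a collision divergence by an information-spectrum quantity, so it is almost certainly the same route as the cited reference (with the trivial extension from $\lambda_1+\lambda_2=1$ to general $\lambda_1,\lambda_2>0$, which the paper's footnote already flags).

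Two small points of hygiene you may want to tighten. First, the claim ``for an arbitrary $c$ strictly smaller than $\exp D_\textnormal{s}^\eta$ one has $\Tr[\rho\{\rho\le c\sigma\}]\le\eta$'' implicitly uses monotonicity of $c\mapsto\Tr[\rho\{\rho\le c\sigma\}]$; this does in fact hold for $c>0$ by a Neyman--Pearson exchange argument, but you can avoid invoking it entirely by simply choosing, for each $\xi>0$, a $c$ that the supremum guarantees to satisfy both $\log c > D_\textnormal{s}^\eta-\xi$ and $\Tr[\rho\{\rho\le c\sigma\}]\le\eta$, and then letting $\xi\downarrow 0$ at the end. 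Second, as you already note, $\tau=\lambda_1\rho+\lambda_2\sigma$ need not be invertible when $\sigma$ is only positive semi-definite; the perturbation $\sigma\mapsto\sigma+\delta\mathds{1}$ with $\delta\downarrow 0$ handles this cleanly (the right-hand side only improves as $\delta\downarrow 0$, since $D_\textnormal{s}^\eta(\rho\|\sigma+\delta\mathds{1})\le D_\textnormal{s}^\eta(\rho\|\sigma)$ is monotone), and one should also check that $B\ne 0$, which follows from $\Tr[\rho Q]\ge 1-\eta>0$. With those two clarifications the argument is complete.
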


\begin{lemm}[Joint convexity {\cite[Proposition 3]{FL13}, \cite{MDS+13, WWY14}}] \label{lemm:joint_convexity}
The map
\begin{align}
    (\rho,\sigma) \mapsto \exp D_2^*(\rho\,\|\,\sigma)
\end{align}
is jointly convex on all positive semi-definite operators $\rho$ and positive definite operators $\sigma$.
\end{lemm}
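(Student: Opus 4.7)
The plan is to reduce joint convexity to the data-processing inequality (DPI) for the sandwiched R\'enyi divergence $D_2^*$ through the standard ancilla / block-diagonal embedding, and then to indicate how the DPI itself can be obtained.

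Given a family of positive semi-definite operators $(\rho_i)_i$, positive definite $(\sigma_i)_i$, and weights $\lambda_i \geq 0$ with $\sum_i \lambda_i = 1$, I introduce an ancilla system $A$ with orthonormal basis $\{|i\rangle\}_i$ and form the block-diagonal extensions
\begin{align*}
\hat{\rho} := \sum_i \lambda_i\,\rho_i \otimes |i\rangle\langle i|, \qquad \hat{\sigma} := \sum_i \lambda_i\,\sigma_i \otimes |i\rangle\langle i|.
\end{align*}
Because $\hat{\sigma}$ is block-diagonal, $\hat{\sigma}^{-1/4} = \sum_i \lambda_i^{-1/4}\sigma_i^{-1/4}\otimes|i\rangle\langle i|$, so a direct computation yields $\hat{\sigma}^{-1/4}\hat{\rho}\hat{\sigma}^{-1/4} = \sum_i \lambda_i^{1/2}(\sigma_i^{-1/4}\rho_i\sigma_i^{-1/4})\otimes|i\rangle\langle i|$, whose square has trace
\begin{align*}
\exp D_2^*(\hat{\rho}\,\|\,\hat{\sigma}) = \sum_i \lambda_i \exp D_2^*(\rho_i\,\|\,\sigma_i).
\end{align*}

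Next, the partial trace $\Tr_A$ is a CPTP map sending $\hat{\rho} \mapsto \sum_i \lambda_i \rho_i$ and $\hat{\sigma} \mapsto \sum_i \lambda_i \sigma_i$, so applying the DPI for $D_2^*$ (and exponentiating) gives
\begin{align*}
\exp D_2^*\!\Big(\sum_i \lambda_i \rho_i\,\Big\|\,\sum_i \lambda_i \sigma_i\Big) \leq \exp D_2^*(\hat{\rho}\,\|\,\hat{\sigma}) = \sum_i \lambda_i \exp D_2^*(\rho_i\,\|\,\sigma_i),
\end{align*}
which is precisely the claimed joint convexity.

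The main obstacle is therefore establishing the DPI for $D_2^*$. I would prove it by combining the integral representation $\sigma^{-1/2} = \pi^{-1}\int_0^\infty (\sigma+s)^{-1} s^{-1/2}\,\d s$ with the fact that the operator perspective $(\rho,\tau) \mapsto \rho\tau^{-1}\rho$ of $x \mapsto x^2$ is jointly operator convex, and then invoking a Stinespring dilation to reduce DPI for an arbitrary CPTP map to (i) invariance under isometries, which is immediate from the definition, and (ii) monotonicity under partial trace, which is exactly where the operator-convexity input is used. Alternatively, one may follow the variational approach of Frank and Lieb. Once DPI is in hand, the ancilla embedding above delivers the lemma mechanically.
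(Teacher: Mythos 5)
The paper does not prove this lemma; it simply cites it from \cite{FL13} (Proposition~3) and related references, so there is no internal argument for direct comparison. Your core reduction, however, is the standard and correct one: the block-diagonal embedding $\hat\rho = \sum_i \lambda_i \rho_i \otimes |i\rangle\langle i|$, $\hat\sigma = \sum_i \lambda_i\sigma_i\otimes|i\rangle\langle i|$ gives $\exp D_2^*(\hat\rho\,\|\,\hat\sigma) = \sum_i\lambda_i\exp D_2^*(\rho_i\,\|\,\sigma_i)$ exactly because the paper's convention for $D_2^*$ omits the $-\log\Tr\rho$ normalization (which is also what makes the quantity homogeneous and the DPI applicable to unnormalized $\rho$), and applying data processing for the partial trace $\Tr_A$ then gives joint convexity of $\exp D_2^*$. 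This is precisely the route that derives Frank--Lieb's Proposition~3 from monotonicity, whereas Frank and Lieb themselves prove the convexity directly from a variational characterization and derive DPI as a corollary; both directions are valid.

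The flaw is in the sketch of how you would establish DPI. Joint operator convexity of the perspective $(\rho,\tau)\mapsto\rho\tau^{-1}\rho$ yields joint convexity of $\Tr[\rho\tau^{-1}\rho]=\Tr[\rho^2\tau^{-1}]$, which is the \emph{Petz} R\'enyi-$2$ functional, not the sandwiched one. What you actually need is joint convexity of
\begin{align}
\exp D_2^*(\rho\,\|\,\sigma) = \Tr\!\left[\left(\sigma^{-1/4}\rho\,\sigma^{-1/4}\right)^2\right] = \Tr\!\left[\rho\,\sigma^{-1/2}\rho\,\sigma^{-1/2}\right],
\end{align}
which has $\sigma^{-1/2}$ appearing on both sides of $\rho$ rather than $\sigma^{-1}$ on one side, and the integral representation $\sigma^{-1/2}=\pi^{-1}\int_0^\infty(\sigma+s)^{-1}s^{-1/2}\,\d s$ does not by itself convert one into the other. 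So the proposed operator-convexity-plus-Stinespring route, as written, proves DPI for the wrong divergence. You should instead either (i) invoke a proof of DPI for the sandwiched R\'enyi divergence that does not pass through joint convexity, e.g.\ Beigi's complex-interpolation argument or the multivariate Golden--Thompson approach, which also avoids the circularity of deriving DPI from the very joint convexity you are trying to establish; or (ii) simply adopt the Frank--Lieb variational proof directly, which you mention as an alternative and which is the source the paper actually cites.
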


\begin{lemm}[Variational formula of the trace distance \cite{Hel67, Hol72}, {\cite[\S 9]{NC09}}]\label{lemm:1norm}
For density operators $\rho$ and $\sigma$,
\begin{align}
\frac{1}{2}\left\|\rho- \sigma\right\|_1 = \sup_{0\leq \Pi \leq \mathds{1}} \Tr\left[\Pi(\rho-\sigma)\right].
\end{align}
\end{lemm}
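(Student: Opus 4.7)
The plan is to exploit the Jordan (Hahn) decomposition of the self-adjoint, traceless operator $\Delta := \rho - \sigma$. Since $\rho$ and $\sigma$ are density operators, $\Delta$ is Hermitian with $\Tr[\Delta] = 0$; via its spectral decomposition I write $\Delta = P - Q$ with $P, Q \geq 0$ and $PQ = 0$. Equivalently, letting $\Pi^+$ and $\Pi^-$ be the spectral projectors onto the strictly positive and strictly negative eigenspaces of $\Delta$, one has $P = \Pi^+ \Delta \Pi^+$ and $Q = -\Pi^- \Delta \Pi^-$. The Schatten-$1$ norm then reduces to $\|\Delta\|_1 = \Tr[P] + \Tr[Q]$, and the condition $\Tr[\Delta]=0$ forces $\Tr[P] = \Tr[Q]$, so $\tfrac{1}{2}\|\Delta\|_1 = \Tr[P]$.

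The upper bound direction (``$\leq$'') is immediate: for any $0 \leq \Pi \leq \mathds{1}$, I would estimate
\[
\Tr[\Pi \Delta] \;=\; \Tr[\Pi P] - \Tr[\Pi Q] \;\leq\; \Tr[\Pi P] \;\leq\; \Tr[P] \;=\; \tfrac{1}{2}\|\Delta\|_1,
\]
where the first inequality uses $\Pi Q \geq 0$, and the second uses $(\mathds{1}-\Pi) \geq 0$ together with $P \geq 0$ to conclude $\Tr[(\mathds{1}-\Pi)P] \geq 0$. For the matching lower bound (``$\geq$''), I would exhibit the explicit feasible operator $\Pi^\star := \Pi^+$. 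Then $\Pi^\star P = P$ and $\Pi^\star Q = 0$, which gives $\Tr[\Pi^\star \Delta] = \Tr[P] = \tfrac{1}{2}\|\Delta\|_1$ and attains the supremum.

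I do not anticipate a genuine obstacle here, as this is the classical Helstrom calculation and relies only on the finite-dimensional spectral theorem to produce the orthogonal Jordan decomposition. The only point demanding mild care is that $\Delta$ may have a nontrivial kernel, so $\Pi^+ + \Pi^-$ need not equal $\mathds{1}$; this is harmless since $\Delta$ annihilates its kernel, and neither inequality is affected by enlarging $\Pi^+$ by any portion of $\ker \Delta$.
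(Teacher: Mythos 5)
Your proof is correct and is exactly the standard Helstrom argument via the Jordan decomposition of $\rho-\sigma$; the paper does not reproduce a proof of this lemma but simply cites \cite{Hel67, Hol72} and \cite[\S 9]{NC09}, and your argument is the one found in those references. The handling of the kernel of $\Delta$ is the right thing to flag and is dispatched correctly.
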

We list some basic properties of the noncommutative quotient introduced in \eqref{eq:quotient} as follows. Since they follow straightforwardly from basic matrix theory, we will use them in our derivations without proofs.
\begin{lemm}[Properties of the noncommutative quotient] \label{lemm:quiotient}
The noncommutative quotient defined in \eqref{eq:quotient} satisfies the following:
\begin{enumerate}[(a)]
    \item\label{item:a} $\displaystyle \frac{A}{B} \geq 0$ for all $A\geq 0$ and $B>0$;
    \item\label{item:b} $\displaystyle\frac{A+B}{C} = \frac{A}{C} + \frac{B}{C}$ for all $A,B\geq 0$ and $C>0$;
    \item\label{item:c} $\displaystyle\frac{A}{A+B} \leq \mathds{1}$ for all $A\geq 0$ and $B>0$;
    \item\label{item:d} $\displaystyle\Tr\left[ A \frac{B}{C} \right] = \Tr\left[ B \frac{A}{C}\right]$ for all $A,B\geq 0$ and $C>0$. 
    
    \item\label{item:e} $\displaystyle\Tr\left[ A \frac{A}{B} \right] = \exp D_2^*(A\,\|\,B)$ for all $A\geq 0$ and $B>0$.
\end{enumerate}
\end{lemm}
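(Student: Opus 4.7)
The plan is to verify each of the five properties directly from the definition $\frac{K}{L} := L^{-1/2} K L^{-1/2}$, relying only on standard operator-theoretic facts: the well-definedness of $L^{-1/2}$ whenever $L > 0$, preservation of positivity under congruence, linearity of the quotient in its numerator, cyclicity of the trace, and the elementary inequality $A \leq A+B$ when $B \geq 0$.

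For (a), I would note that $B^{-1/2}$ is self-adjoint and positive, so the congruence $A \mapsto B^{-1/2} A B^{-1/2}$ preserves the cone of positive semi-definite operators. Property (b) is pure linearity: $C^{-1/2}$ distributes across the sum $A+B$ from both sides. For (c), I would start from $A \leq A+B$ (which holds because $B \geq 0$), then conjugate both sides by $(A+B)^{-1/2}$; this inverse exists because $B > 0$ forces $A+B > 0$, and conjugation by a self-adjoint operator preserves operator inequalities, yielding
\begin{align}
\frac{A}{A+B} = (A+B)^{-1/2} A (A+B)^{-1/2} \leq (A+B)^{-1/2}(A+B)(A+B)^{-1/2} = \mathds{1}.
\end{align}

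Property (d) follows from cyclicity of the trace applied twice:
\begin{align}
\Tr\left[ A \frac{B}{C} \right] = \Tr\left[ A C^{-1/2} B C^{-1/2} \right] = \Tr\left[ C^{-1/2} A C^{-1/2} B \right] = \Tr\left[ B \frac{A}{C} \right].
\end{align}
For (e), I would expand the collision divergence using $B^{-1/4} B^{-1/4} = B^{-1/2}$ together with cyclicity:
\begin{align}
\exp D_2^*(A \,\|\, B) = \Tr\left[ \left( B^{-1/4} A B^{-1/4} \right)^2 \right] = \Tr\left[ A B^{-1/2} A B^{-1/2} \right] = \Tr\left[ A \frac{A}{B} \right].
\end{align}

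Since every item reduces to a one-line manipulation, there is no substantive obstacle; the only mild point of care is ensuring that each inverse square root invoked is well-defined, which is guaranteed by the strict positivity hypotheses on $B$ (and by the consequent strict positivity of $A+B$ in (c)). This transparency is precisely why the authors defer these facts to ``basic matrix theory'' and use them without written proof in the sequel.
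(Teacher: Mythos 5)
Your verifications are all correct and match the standard arguments the authors had in mind when they deferred these facts to ``basic matrix theory'' without written proof. Since the paper itself provides no proof of this lemma, there is nothing to compare beyond noting that your one-line derivations are exactly the ones the authors implicitly rely on in the sequel.
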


\begin{lemm}[Jensen's inequalities {\cite{Cho74},\cite{HP03}}] \label{lemm:Jensen}
Let $\Phi$ be a unital positive linear map between two spaces of bounded operators (possibly with different dimensions), and let $f$ be an operator concave function. Then for every  self-adjoint operator $H$, one has
\begin{align}
    \Phi(f(H)) \leq f(\Phi(H))\,.
\end{align}
For examples, the unital positive linear map $\Phi$ considered in this paper include 
(i) taking expectation $\mathds{E}$ on (possibly matrix-valued) random variables;
(ii) the pinching map $\mathscr{P}$;
and (iii) the functional $H\mapsto \Tr[\rho H] \in [0,1]$ for some density operator $\rho$.
\end{lemm}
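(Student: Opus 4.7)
The plan is to reduce the general inequality to a canonical ``harmonic mean'' kernel via the L\"owner--Hansen--Pedersen integral representation of operator concave functions, and then handle the canonical kernel by a Schur complement argument. First I would invoke the representation
\begin{align}
    f(t) = \alpha + \beta t + \int_0^\infty \frac{st}{s+t}\, \mathrm{d}\nu(s),
\end{align}
valid for every operator concave $f:[0,\infty)\to \mathds{R}$, with $\alpha \in \mathds{R}$, $\beta \geq 0$, and $\nu$ a positive Borel measure on $(0,\infty)$. Because $\Phi$ is linear and unital, the constant and linear terms contribute equally to both sides of the claimed inequality, so after an application of Fubini it suffices to prove $\Phi(g_s(H)) \leq g_s(\Phi(H))$ for the single kernel $g_s(t) := st/(s+t)$ at each fixed $s>0$.

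Next I would rewrite $g_s(H) = s\cdot \mathds{1} - s^2 (s+H)^{-1}$ and use unitality of $\Phi$ once more to reduce the desired inequality to the operator inequality
\begin{align}
    \Phi\!\left((s+H)^{-1}\right) \geq \bigl(s+ \Phi(H)\bigr)^{-1}.
\end{align}
To establish this, I would start from the positive block-matrix factorization
\begin{align}
    \begin{pmatrix} (s+H)^{-1} & \mathds{1} \\ \mathds{1} & s+H \end{pmatrix}
    = \begin{pmatrix} (s+H)^{-1/2} \\ (s+H)^{1/2}\end{pmatrix}
     \begin{pmatrix} (s+H)^{-1/2} & (s+H)^{1/2}\end{pmatrix}
    \geq 0,
\end{align}
apply $\Phi \otimes \mathrm{id}$ coordinate-wise, and invoke the Schur complement criterion for positivity to conclude.

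The main technical obstacle is that the block-matrix step requires $\Phi \otimes \mathrm{id}_2$ to remain positive, i.e., $\Phi$ must be $2$-positive rather than merely positive. For a general positive unital map the inequality then holds only for operator monotone (equivalently operator concave of positive variable) $f$, via a separate approximation argument based on the integral representation of operator monotone functions. This is not a real issue for the applications at hand: the three examples singled out in the lemma---taking expectations of operator-valued random variables, the pinching map, and the functional $H \mapsto \Tr[\rho H]$ for a density operator $\rho$---are all completely positive, so $2$-positivity is automatic and the Schur complement argument applies verbatim. Finally, for self-adjoint $H$ whose spectrum is not contained in $[0,\infty)$, I would shift by a large multiple of the identity to reduce to the positive case, since both sides of the inequality are covariant under $t \mapsto t + c$ by unitality.
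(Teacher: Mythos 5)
The paper treats this lemma as a citation (to Choi and Hansen--Pedersen) and does not prove it, so you are supplying an argument the paper omits. Your skeleton---L\"owner-type integral representation, reduction to the resolvent inequality $\Phi\bigl((s+H)^{-1}\bigr) \geq \bigl(s+\Phi(H)\bigr)^{-1}$, and a Schur-complement verification---is a perfectly standard and correct route, and it does establish the inequality in the form actually used in the paper: the only $f$ invoked is $t\mapsto\sqrt t$, and all three example maps are completely positive.

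There are, however, two inaccuracies in how you frame the general statement. First, the representation $f(t)=\alpha+\beta t+\int_0^\infty \tfrac{st}{s+t}\,\mathrm d\nu(s)$ characterizes \emph{operator monotone} functions on $[0,\infty)$, not all operator concave ones: since $\tfrac{st}{s+t}\le s$ is bounded in $t$, any function of this form satisfies $\lim_{t\to\infty} f(t)/t=\beta<\infty$, whereas $f(t)=-t^2$ is operator concave on $[0,\infty)$ and fails this. Relatedly, your remark that a general positive unital map only gives the conclusion for ``operator monotone (equivalently operator concave of positive variable)'' is not a correct equivalence---again $-t^2$ is a counterexample---so the claimed fallback for merely positive $\Phi$ is not actually available along the lines you suggest. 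Second, and more importantly, the $2$-positivity restriction you flag is not a genuine restriction of the Davis--Choi theorem: the standard way to remove it is to observe that a positive linear map restricted to the commutative $C^*$-algebra $C^*(H,\mathds 1)$ generated by a single self-adjoint $H$ is automatically completely positive (Stinespring's theorem for commutative domains). Since both $f(H)$ and $H$ lie in this subalgebra, you may replace $\Phi$ by $\Phi|_{C^*(H,\mathds 1)}$, which makes $\Phi\otimes\mathrm{id}_2$ positive for free and lets your Schur-complement step go through for any positive unital $\Phi$. With that observation inserted, and with the integral representation restricted to the operator-monotone case (or replaced by the Hansen--Pedersen representation of operator concave functions, which has an extra quadratic-type term that must be handled by Kadison--Schwarz), your proof becomes a complete and self-contained argument for the lemma.
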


\begin{lemm}[Second-order expansion \cite{TH13, Li14}]\label{lemm:second}
For every density operator $\rho$, positive definite operator $\sigma$, $0<\varepsilon<1$, and $\delta = O(\sfrac{1}{\sqrt{n}})$, we have the following expansion\footnote{
We note that Lemmas~\ref{lemm:second} and \ref{lemm:moderate} were originally stated for normalized $\sigma$ \cite{TH13, Li14, CTT2017}.
They hold for positive definite operator $\sigma$ as well by observing that
$D_\text{h}^\varepsilon(\rho\,\|\,\lambda\sigma) = D_\text{h}^\varepsilon(\rho\,\|\,\sigma) - \log \lambda$, $D(\rho\,\|\,\lambda\sigma) = D(\rho\,\|\,\sigma) - \log \lambda$, 
and $V(\rho\,\|\,\lambda\sigma) = V(\rho\,\|\,\sigma)$ for all $\lambda>0$.
}:
\begin{align}
    D_\textnormal{h}^{\varepsilon \pm \delta} \left(\rho^{\otimes n} \,\|\, \sigma^{\otimes n} \right) = n D\left(\rho \,\|\, \sigma\right) + \sqrt{n V\left(\rho \,\|\, \sigma\right)} \Phi^{-1} (\varepsilon) + O\left(\log n\right),
\end{align}
where $\Phi$ is the cumulative normal distribution $\Phi(u):= \int_{-\infty}^u \frac{1}{\sqrt{2\pi}} \mathrm{e}^{-\frac12 t^2}\mathrm{d}t$, and its inverse $\Phi^{-1}(\varepsilon) := \sup\{u\mid \Phi(u)\leq \varepsilon\}$.
\end{lemm}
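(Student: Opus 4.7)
The plan is to proceed in three stages, following the route of Tomamichel--Hayashi and Li: (i) sandwich $D_\text{h}^{\varepsilon\pm\delta}(\rho^{\otimes n}\|\sigma^{\otimes n})$ between two information-spectrum divergences via Lemma~\ref{lemm:relation}; (ii) reduce the resulting non-commutative information-spectrum quantity to a commuting/classical problem via Hayashi's pinching (Lemma~\ref{lemm:pinching}), at the cost of only $O(\log n)$ corrections; and (iii) apply the Berry--Esseen central limit theorem to the resulting i.i.d.\ log-likelihood-ratio sum.

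For stage~(i), I choose the auxiliary slack $\delta_n := 1/\sqrt{n}$ and apply Lemma~\ref{lemm:relation} to $(\rho^{\otimes n},\sigma^{\otimes n})$. Combined with the spectrum bound $\log|\textnormal{\texttt{spec}}(\sigma^{\otimes n})| \leq (d-1)\log(n+1)$ from Lemma~\ref{lemm:pinching}, I obtain
\begin{align}
D_\text{s}^{\varepsilon\pm\delta - \delta_n}\!\left(\mathscr{P}_{\sigma^{\otimes n}}[\rho^{\otimes n}]\,\middle\|\,\sigma^{\otimes n}\right) - O(\log n)
\;\leq\; D_\text{h}^{\varepsilon\pm\delta}(\rho^{\otimes n}\|\sigma^{\otimes n})
\;\leq\; D_\text{s}^{\varepsilon\pm\delta + \delta_n}(\rho^{\otimes n}\|\sigma^{\otimes n}) + O(\log n),
\end{align}
since $-\log\delta_n = \tfrac12\log n$ is also $O(\log n)$. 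For stage~(ii), on the lower-bound side the pinched operator $\mathscr{P}_{\sigma^{\otimes n}}[\rho^{\otimes n}]$ already commutes with $\sigma^{\otimes n}$. On the upper-bound side, I either (a) work with an explicit test $T_n = \{\mathscr{P}_{\sigma^{\otimes n}}[\rho^{\otimes n}] \geq c\,\sigma^{\otimes n}\}$ which commutes with $\sigma^{\otimes n}$, and use the self-adjointness of the pinching map, $\Tr[\rho^{\otimes n} T_n] = \Tr[\mathscr{P}_{\sigma^{\otimes n}}[\rho^{\otimes n}] T_n]$; or (b) use the pinching inequality $\rho^{\otimes n} \leq |\textnormal{\texttt{spec}}(\sigma^{\otimes n})| \cdot \mathscr{P}_{\sigma^{\otimes n}}[\rho^{\otimes n}]$ to replace $\rho^{\otimes n}$ by its pinching at the price of an additional $O(\log n)$ shift in $D_\text{s}^\varepsilon$. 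Both routes reduce the problem to information-spectrum statements for the commuting pair $(\mathscr{P}_{\sigma^{\otimes n}}[\rho^{\otimes n}],\sigma^{\otimes n})$, with single-site log-likelihood variable $X := \log(\mathscr{P}_\sigma[\rho]/\sigma)$ whose Born mean is $D(\rho\|\sigma)$ and Born variance is $V(\rho\|\sigma)$.

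For stage~(iii), since the pair commutes it diagonalises simultaneously, reducing
$\Tr[\mathscr{P}_{\sigma^{\otimes n}}[\rho^{\otimes n}]\{\mathscr{P}_{\sigma^{\otimes n}}[\rho^{\otimes n}] \leq c\,\sigma^{\otimes n}\}]$ to the classical probability $\Pr[\sum_{k=1}^n X_k \leq \log c]$ for i.i.d.\ real variables $X_k$ of mean $D(\rho\|\sigma)$ and variance $V(\rho\|\sigma)$. The Berry--Esseen theorem then yields
\begin{align}
\Pr\!\left[\textstyle\sum_{k=1}^n X_k \leq nD(\rho\|\sigma) + \sqrt{nV(\rho\|\sigma)}\,z\right] = \Phi(z) + O(1/\sqrt{n})
\end{align}
uniformly in $z$, assuming $V(\rho\|\sigma)>0$ (the degenerate case $V=0$ makes $\rho$ and $\sigma$ proportional on the support of $\rho$ and is direct). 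Inverting at level $\varepsilon \pm O(1/\sqrt{n})$ and using the local Lipschitz estimate $\Phi^{-1}(\varepsilon \pm O(1/\sqrt{n})) = \Phi^{-1}(\varepsilon) + O(1/\sqrt{n})$ produces $\log c = nD(\rho\|\sigma) + \sqrt{nV(\rho\|\sigma)}\,\Phi^{-1}(\varepsilon) + O(1)$, which combined with the earlier $O(\log n)$ remainders gives the claimed expansion. The footnote case of unnormalised positive-definite $\sigma$ is then immediate from $D_\text{h}(\cdot\|\lambda\sigma) = D_\text{h}(\cdot\|\sigma) - \log\lambda$, $D(\cdot\|\lambda\sigma) = D(\cdot\|\sigma) - \log\lambda$, and $V(\cdot\|\lambda\sigma) = V(\cdot\|\sigma)$.

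The main obstacle is the pinching reduction in stage~(ii): one must certify that converting the non-commutative problem into a classical i.i.d.\ sum costs only $O(\log n)$ rather than $O(\sqrt{n})$, which would swamp the second-order term. The polynomial spectrum bound $|\textnormal{\texttt{spec}}(\sigma^{\otimes n})| \leq (n+1)^{d-1}$ is exactly what makes this viable. A related subtle point is that $\mathscr{P}_{\sigma^{\otimes n}}[\rho^{\otimes n}]$ is not literally the product $(\mathscr{P}_\sigma[\rho])^{\otimes n}$ (product eigenvalues of $\sigma$ can coincide, making the $\sigma^{\otimes n}$-pinching strictly coarser than the product pinching), but this is absorbed either by working test-side via the pinching self-adjointness identity or by another polynomial-in-$n$ pinching loss, both of which remain within the $O(\log n)$ budget.
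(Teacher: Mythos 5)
The paper does not actually prove this lemma; it imports it from \cite{TH13, Li14} and only verifies the footnote's scaling identities for unnormalized $\sigma$ (which you also do correctly at the end). So you are effectively being compared against those references, and there the central step is precisely the one where your sketch breaks down.

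The gap is in stage (iii). You declare the single-site log-likelihood variable to be $X := \log\bigl(\mathscr{P}_\sigma[\rho]/\sigma\bigr)$ and assert that its Born mean is $D(\rho\|\sigma)$ and its Born variance is $V(\rho\|\sigma)$. Neither is true when $\rho$ and $\sigma$ do not commute: the Born mean of $X$ under $\mathscr{P}_\sigma[\rho]$ is $D(\mathscr{P}_\sigma[\rho]\|\sigma) = D(\rho\|\sigma) - D(\rho\|\mathscr{P}_\sigma[\rho]) < D(\rho\|\sigma)$, a gap that is a constant per copy (hence $\Theta(n)$ over $n$ copies, not $O(\log n)$), and the variance is $V(\mathscr{P}_\sigma[\rho]\|\sigma)$, which also generically differs from $V(\rho\|\sigma)$. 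So if you literally used $\mathscr{P}_\sigma[\rho]^{\otimes n}$ you would get the wrong first-order constant. You correctly point out that the relevant object is instead the coarser pinching $\mathscr{P}_{\sigma^{\otimes n}}[\rho^{\otimes n}]$, whose relative entropy against $\sigma^{\otimes n}$ is $nD(\rho\|\sigma) - O(\log n)$ by the pinching inequality. But this pinched operator is not a tensor power, so its classical log-likelihood ratio against $\sigma^{\otimes n}$ is not a sum of i.i.d.\ variables, and the Berry--Esseen step you invoke simply does not apply to it directly. Moreover, establishing that its variance is $nV(\rho\|\sigma) + o(n)$ is a genuine theorem, not a $O(\log n)$ bookkeeping fact; it is equivalent to controlling the second cumulant of the pinched Rényi divergence near $\alpha=1$. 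Characterizing this as "absorbed either by working test-side via the pinching self-adjointness identity or by another polynomial-in-$n$ pinching loss" is exactly where the argument is incomplete.

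The device that \cite{TH13, Li14} actually use to close this gap is the Nussbaum--Szko\l{}a pair of classical distributions $(P_{\rho,\sigma}, P_{\sigma,\rho})$, which (a) tensorize, so the $n$-copy log-likelihood \emph{is} an i.i.d.\ sum, and (b) satisfy the exact moment matching $D(P_{\rho,\sigma}\|P_{\sigma,\rho}) = D(\rho\|\sigma)$ and $V(P_{\rho,\sigma}\|P_{\sigma,\rho}) = V(\rho\|\sigma)$, which is what makes Berry--Esseen produce the correct first- and second-order terms. One then sandwiches the quantum $D_\textnormal{h}$ between the classical $D_\textnormal{h}$ (equivalently $D_\textnormal{s}$) of the NS pair with $O(\log n)$ corrections, using data processing in one direction and a pinching argument in the other. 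Your stages (i) and (ii) reconstruct the second of these ingredients, but the NS construction is the missing idea, and without it the expansion cannot be finished.
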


\begin{lemm}[Moderate deviations {\cite[Theorem 1]{CTT2017}}] \label{lemm:moderate}
Let $(a_n)_{n\in\mathds{N}}$ be a moderate sequence satisfying 
\begin{align} \label{eq:an}
        \lim_{n\to \infty} a_n = 0 \ , \ 
        \lim_{n\to \infty} n a_n^2 = \infty \pl. 
\end{align}
and let $\varepsilon_n := \mathrm{e}^{-n a_n^2}$. For any density operator $\rho$ and positive definite operator $\sigma$, the following asymptotic expansions hold \footnote{
We note that the $\varepsilon$-hypothesis testing divergence $D_\textnormal{h}^{ \varepsilon }$ used in \cite{CTT2017} has an additional $\log (1-\varepsilon)$ term than our definition in \eqref{eq:Dh}.
Nonetheless, this does not affect the moderate deviation results since the additional terms are $\frac1n \log (1-\varepsilon_n) = o(a_n)$ and $\frac1n \log \varepsilon_n = - a_n^2 = o(a_n)$ for any moderate sequence $(a_n)_{n\in\mathds{N}}$.
}:
\begin{align}
\begin{dcases}
    \frac{1}{n} D_\textnormal{h}^{1 - \varepsilon_n } \left(\rho^{\otimes n} \,\|\, \sigma^{\otimes n} \right) = D\left(\rho \,\|\, \sigma\right) + \sqrt{2 V\left(\rho \,\|\, \sigma\right)} a_n + o\left(a_n\right); \\
    \frac{1}{n} D_\textnormal{h}^{\varepsilon_n } \left(\rho^{\otimes n} \,\|\, \sigma^{\otimes n} \right) = D\left(\rho \,\|\, \sigma\right) - \sqrt{2 V\left(\rho \,\|\, \sigma\right)} a_n + o\left(a_n\right). \\
\end{dcases}
\end{align}
\end{lemm}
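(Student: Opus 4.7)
The plan is to establish matching achievability (lower) and converse (upper) bounds for the asymptotic expansions of $D_\textnormal{h}^{\varepsilon_n}$ and $D_\textnormal{h}^{1-\varepsilon_n}$ in the i.i.d.\ tensor-product regime by combining two ingredients: (i) the pinching-based reduction of Lemma~\ref{lemm:relation}, which converts the quantum testing problem into a classical moderate-deviation problem for a log-likelihood ratio, and (ii) a Taylor expansion of the sandwiched Rényi divergence $D_\alpha^*$ around $\alpha = 1$. The two sign cases are related by the $\alpha \leftrightarrow 1/\alpha$ duality between direct and converse Rényi divergences, so it suffices to argue $D_\textnormal{h}^{1-\varepsilon_n}$ in detail and transpose for the other one.

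For the achievability, I would apply Lemma~\ref{lemm:relation} with $\delta_n = \mathrm{e}^{-n a_n^2/2}$ to obtain
\begin{align}
D_\textnormal{h}^{1-\varepsilon_n}\bigl(\rho^{\otimes n}\,\big\|\,\sigma^{\otimes n}\bigr) \geq D_\textnormal{s}^{1-\varepsilon_n - 2\delta_n}\bigl(\mathscr{P}_{\sigma^{\otimes n}}[\rho^{\otimes n}] \,\big\|\, \sigma^{\otimes n}\bigr) - \log\bigl|\textnormal{\texttt{spec}}(\sigma^{\otimes n})\bigr| - 2\log\delta_n,
\end{align}
in which both correction terms are $o(n a_n)$ by Lemma~\ref{lemm:pinching} and the sub-linear growth $n a_n^2 \to \infty$. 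Because $\mathscr{P}_{\sigma^{\otimes n}}[\rho^{\otimes n}]$ commutes with $\sigma^{\otimes n}$, the information-spectrum divergence reduces to a quantile of a classical log-likelihood ratio $L_n := \log(\mathscr{P}_{\sigma^{\otimes n}}[\rho^{\otimes n}]/\sigma^{\otimes n})$, whose mean and variance under the pinched state are $nD(\rho\,\|\,\sigma) + O(\log n)$ and $nV(\rho\,\|\,\sigma) + O(\log n)$ respectively, by sub-additivity of pinched relative entropy and its variance. The classical moderate deviation principle applied to $L_n$ then delivers the desired shift $+\sqrt{2V(\rho\,\|\,\sigma)}\,a_n$. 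For the converse, I would use the Nagaoka-Hayashi-type bound $D_\textnormal{h}^\varepsilon(\rho\,\|\,\sigma) \leq D_\alpha^*(\rho\,\|\,\sigma) + \tfrac{\alpha}{\alpha-1}\log\tfrac{1}{1-\varepsilon}$ for $\alpha > 1$ (which follows from data processing of $D_\alpha^*$ under the binary measurement associated to the optimal test), tensorize, set $\alpha_n = 1 + c a_n$ for a free parameter $c > 0$, and insert the Taylor expansion $D_{\alpha_n}^*(\rho\,\|\,\sigma) = D(\rho\,\|\,\sigma) + \tfrac{1}{2}(\alpha_n - 1)V(\rho\,\|\,\sigma) + O((\alpha_n-1)^2)$. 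The $a_n$-coefficient on the right-hand side becomes $\tfrac{c V(\rho\,\|\,\sigma)}{2} + \tfrac{1}{c}$, which is minimized at $c = \sqrt{2/V(\rho\,\|\,\sigma)}$ to give precisely $\sqrt{2V(\rho\,\|\,\sigma)}$, matching the achievability constant. The other regime $D_\textnormal{h}^{\varepsilon_n}$ is handled symmetrically via the Petz Rényi divergence $\underline{D}_\alpha$ at $\alpha_n = 1 - c a_n$.

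The main obstacle is securing a \emph{uniform} $O((\alpha-1)^2)$ remainder in the Rényi Taylor expansion so that at $\alpha_n = 1 \pm c a_n$ the remainder contributes only $o(a_n)$ to the normalized divergence. This requires the support condition $\supp(\rho)\subseteq \supp(\sigma)$ to guarantee analyticity of $\alpha \mapsto \tr[(\sigma^{(1-\alpha)/(2\alpha)}\rho\sigma^{(1-\alpha)/(2\alpha)})^\alpha]$ near $\alpha = 1$, together with a parallel sub-additivity control on the pinched variance on the achievability side. Tracking the sharp constant $\sqrt{2V(\rho\,\|\,\sigma)}$ through the $c$-optimization is the quantum analogue of the classical Bahadur-Ranga Rao moderate-deviation computation, and ensuring its precise matching between achievability and converse is the delicate step that fixes the moderate-deviation coefficient.
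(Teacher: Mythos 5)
The paper does not prove Lemma~\ref{lemm:moderate}; it is imported directly from Chubb, Tan, and Tomamichel~\cite{CTT2017}, and the only new content is the footnote reconciling the different normalizations of $D_\textnormal{h}$. So there is no in-paper proof to compare you against; your sketch should be judged against the argument of~\cite{CTT2017}, and at a high level your two-pronged plan (pinching/information-spectrum for achievability, Rényi monotonicity plus Taylor expansion at $\alpha_n = 1 \pm c a_n$ for the converse) is indeed the route that reference takes. Your converse arithmetic for the $D_\textnormal{h}^{1-\varepsilon_n}$ branch is correct: the Nagaoka-type bound $D_\textnormal{h}^\varepsilon(\rho\,\|\,\sigma) \le D_\alpha^*(\rho\,\|\,\sigma) + \tfrac{\alpha}{\alpha-1}\log\tfrac{1}{1-\varepsilon}$ with $\alpha_n = 1+c a_n$, tensorized and Taylor-expanded, gives $D + a_n(\tfrac{cV}{2} + \tfrac{1}{c}) + o(a_n)$, whose minimum over $c>0$ is $D + \sqrt{2V}\,a_n$.

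Two genuine gaps remain, and they are not the ones you flag. First, on the achievability side, you assert that ``the classical moderate deviation principle applied to $L_n$ delivers the desired shift.'' But $L_n := \log\mathscr{P}_{\sigma^{\otimes n}}[\rho^{\otimes n}] - \log\sigma^{\otimes n}$, read as a random variable under the law induced by $\mathscr{P}_{\sigma^{\otimes n}}[\rho^{\otimes n}]$, is \emph{not} a sum of i.i.d.\ terms: pinching onto the eigenspaces of $\sigma^{\otimes n}$ is strictly coarser than the per-letter pinching (the eigenspaces aggregate degenerate products of $\sigma$-eigenvalues), so $\mathscr{P}_{\sigma^{\otimes n}}[\rho^{\otimes n}] \neq (\mathscr{P}_\sigma[\rho])^{\otimes n}$ in general. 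Knowing the first two moments are $nD + O(\log n)$ and $nV + O(\log n)$ is not sufficient for a moderate-deviation principle. One needs uniform control of the cumulant generating function of $L_n$ in a shrinking window around the origin; this is obtained by identifying the CGF with a (Petz) R\'enyi divergence of the pinched pair and sandwiching that within $O(\log n)$ of $n$ times the per-copy R\'enyi divergence. This CGF bookkeeping \emph{is} the technical heart of~\cite{CTT2017}, and a bare appeal to the classical i.i.d.\ MDP does not substitute for it.

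Second, the claim that $D_\textnormal{h}^{\varepsilon_n}$ is ``handled symmetrically via the Petz R\'enyi divergence $\underline{D}_\alpha$ at $\alpha_n = 1-ca_n$'' does not hold up. The converse for this branch must produce the \emph{negative} correction $-\sqrt{2V}\,a_n$, but the monotonicity bound applied to the binary measured pair with type-I error at most $\varepsilon_n$ contributes a term of size $O\left(\tfrac{\alpha}{\alpha-1}\log\tfrac{1}{1-\varepsilon_n}\right) = O\left(\tfrac{\varepsilon_n}{a_n}\right)$, which is exponentially small and has the wrong sign; it cannot produce an $O(n a_n)$ downward shift. The converse in this regime again routes through Lemma~\ref{lemm:relation} (reducing $D_\textnormal{h}^{\varepsilon_n}$ to $D_\textnormal{s}^{\varepsilon_n+\delta}$) and a classical moderate-deviation quantile estimate, so the i.i.d.\ gap from the previous paragraph recurs here rather than being bypassed by a R\'enyi Taylor expansion. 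In short: your high-level strategy matches~\cite{CTT2017}, but what you identify as the ``delicate step'' (uniform Taylor remainder) is the easy part, and the part you treat as routine (MDP for the pinched log-likelihood and the $D_\textnormal{h}^{\varepsilon_n}$ converse) is where the actual work lives.
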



\section{Privacy Amplification against Quantum Side Information} \label{sec:PA}
Let $\mathcal{H}_E$ is finite dimensional be a finite dimensional Hilbert space. Consider a classical-quantum state $\rho_{XE} = \sum_{x\in\mathcal{X}} p_X(x) |x\rangle \langle x|\otimes \rho_E^x$. Without loss of generality, we assume that the marginal density $\rho_E$ is invertible. 

In this work, we consider a \emph{strongly $2$-universal random hash function} $h :\mathcal{X} \to \mathcal{Z}$ is a random function which satisfies for all $ x, x'\in \mathcal{X}$ with $x\neq x'$ and $z,z'\in \mathcal{Z}$,
	\begin{align}
		\Pr_{h}\left\{ h(x) = z \; \wedge \; h(x') = z' \right\} = \frac{1}{|\mathcal{Z}|^2}\, .
	\end{align}
Namely, the output $h(x)$ for each input $x$ is uniform and pairwise independent.
Alice applies the linear operation $\mathcal{R}^h_{X\to Z} $ on her system $X$ by the following:
\begin{align}
	\mathcal{R}^h(\rho_{XE}) &:= \sum_{x\in\mathcal{X}} p_X(x)|h(x)\rangle \langle h(x)|\otimes \rho_E^x \, .
\end{align}
A perfectly randomizing channel $\mathcal{U}_{X\to Z}$ from $\mathcal{X}$ to $\mathcal{Z}$ is defined as
\begin{align}
	\mathcal{U}(\theta_X) = \frac{\mathds{1}_Z}{|\mathcal{Z}|}\left(\sum_{x}\theta_X(x)\right).
\end{align}

We define the \emph{maximal extractable randomness} for an $\varepsilon$-secret extractor \cite[\S 7]{Tom16} as
\begin{align}
    \ell^{\varepsilon}(X{\,|\,}E)_{\rho} := \sup\left\{  \ell \in \mathds{N}: |\mathcal{Z}|\geq \ell \wedge  
    \frac{1}{2}\mathds{E}_{h}\left\|\mathcal{R}^h(\rho_{XE})-\mathcal{U}(\rho_{XE}) \right\|_1 \leq \varepsilon \right\}.\label{def:maxl}
\end{align}
The main result of this section is to prove the following one-shot characterization of the operational quantity $\ell^{\varepsilon}(X{\,|\,}E)_{\rho}$.
\begin{theo}\label{theo:one-shot_second-order_PA}
	Let $\rho_{XE}$ be a classical-quantum state, and let $h(x): \mathcal{X} \to \mathcal{Z}$ be a strongly $2$-universal hash function. 
	Then, for every $0<\varepsilon<1$ and $0< c < \delta< \frac{\varepsilon}{3}\wedge\frac{(1-\varepsilon)}{2}$, we have 
	\begin{align}
H_\textnormal{h}^{1-\varepsilon + 3\delta}(X{\,|\,}E)_{\rho}  -\log\frac{\nu^2}{\delta^4} \leq\log \ell^{\varepsilon}(X{\,|\,}E)_{\rho} \leq H_\textnormal{h}^{1-\varepsilon-2\delta}(X{\,|\,}E)_{\rho} + \log\left(\frac{1+c}{c\delta}\right) + \log\left(\frac{\varepsilon+c}{\delta -c} \right).
	\end{align}
Here, $H_\textnormal{h}^{\varepsilon}$ is the conditional $\varepsilon$-hypothesis testing entropy defined in \eqref{eq:one-shot_entropy} and
$\nu = |\textnormal{\texttt{spec}}(\rho_E)|$.
\end{theo}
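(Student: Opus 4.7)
The plan is to analyse both directions through direct control of the expected trace distance $\mathds{E}_h\|\mathcal{R}^h(\rho_{XE}) - \mathcal{U}(\rho_{XE})\|_1$, avoiding any smoothing step.

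For the achievability (lower bound on $\ell^\varepsilon$), I would first establish a Cauchy--Schwarz-type inequality $\|\omega^h - \tau\|_1 \leq \sqrt{\exp D_2^*(\omega^h \,\|\, \tau) - 1}$ with $\omega^h := \mathcal{R}^h(\rho_{XE})$ and $\tau := \mathds{1}_{\mathcal{Z}}/|\mathcal{Z}| \otimes \rho_E$, using the noncommutative quotient together with Lemma~\ref{lemm:quiotient}. Next I would expand $\mathds{E}_h \exp D_2^*(\omega^h \,\|\, \tau)$ by exploiting $2$-universality to evaluate the collision moments $\mathds{E}_h[\delta_{h(x),h(y)}] = 1/|\mathcal{Z}|$ for $x \neq y$, obtaining the standard leftover-hash formula $1 + (|\mathcal{Z}| - 1)\exp D_2^*(\rho_{XE} \,\|\, \mathds{1}_X \otimes \rho_E)$. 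This is only a R\'{e}nyi-$2$ bound, so to reach $H_\textnormal{h}^{1-\varepsilon+3\delta}$ without smoothing I would replace the reference $\mathds{1}_X \otimes \rho_E$ by a mixture $\lambda \rho_{XE} + (1-\lambda)(\mathds{1}_X \otimes \rho_E)$, apply the joint convexity of $\exp D_2^*$ (Lemma~\ref{lemm:joint_convexity}), and invoke the Beigi--Gao-type lower bound (Lemma~\ref{lemm:D2toDs}) to pass to the information spectrum divergence $D_\textnormal{s}^{\eta}$. A final conversion via Lemma~\ref{lemm:relation} gives $D_\textnormal{h}^{\eta \pm \delta}$ at the price of the pinching factor $\log |\textnormal{\texttt{spec}}(\rho_E)| = \log \nu$. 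Choosing $\lambda \sim \delta^2/|\mathcal{Z}|$ balances the two terms and produces the additive penalty $\log(\nu^2/\delta^4)$.

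For the converse (upper bound on $\ell^\varepsilon$), I would use the variational formula (Lemma~\ref{lemm:1norm}) to rewrite the security condition $\tfrac{1}{2}\|\omega^h - \tau\|_1 \leq \varepsilon$ as $\Tr[\Pi(\omega^h - \tau)] \leq \varepsilon$ for every $0 \leq \Pi \leq \mathds{1}$. Starting from the optimal test $T^*$ witnessing $D_\textnormal{h}^{1-\varepsilon-2\delta}(\rho_{XE} \,\|\, \mathds{1}_X \otimes \rho_E)$, I would pull $T^*$ through the hashing channel and average over $h$; by $2$-universality, the existence of any $\varepsilon$-secret hash of output size $|\mathcal{Z}|$ then forces $\Tr[\rho_{XE} T^*] \geq \varepsilon + 2\delta$ together with $\Tr[(\mathds{1}_X \otimes \rho_E) T^*] \geq |\mathcal{Z}|/C$, where $C = \bigl(\tfrac{1+c}{c\delta}\bigr)\bigl(\tfrac{\varepsilon+c}{\delta-c}\bigr)$ absorbs the trade-off governed by the auxiliary parameter $c$. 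Plugging these inequalities into the definition of the conditional hypothesis-testing entropy yields the converse.

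The main obstacle is the achievability direction: coordinating the Cauchy--Schwarz inequality, the joint convexity of $\exp D_2^*$, the Beigi--Gao lower bound, and the $D_\textnormal{s}\!\leftrightarrow\! D_\textnormal{h}$ conversion so that the final penalty is only logarithmic in $\nu$ and $\delta$. The pinching factor $\nu$ arises unavoidably during the $D_\textnormal{s} \to D_\textnormal{h}$ conversion of Lemma~\ref{lemm:relation}, and one must ensure that no further dimensional dependence (on $\dim \mathcal{H}_E$ or $|\mathcal{X}|$) leaks into the final bound. The delicate parameter balancing between $\lambda$ in the Beigi--Gao step and $\delta$ in the pinching step is what ultimately produces the precise form $\log(\nu^2/\delta^4)$ stated in the theorem, and a correspondingly careful tuning of $c$ is what produces the converse constants.
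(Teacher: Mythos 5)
Your plan is genuinely different from the paper's in both directions, but in its present form each direction has a gap.

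For the achievability, the paper does not pass through the R\'enyi-$2$ leftover-hash formula at all. It fixes, for each $x$, a projector $\Pi_x := \{\mathscr{P}_{\rho_E}[p(x)\rho_x] \leq c\rho_E\}$, writes the expected trace distance as a sum over $z$, splits each summand into three pieces via the triangle inequality, and bounds the dominant pieces using a matrix-valued variance, Jensen's inequality for the trace--square-root, and the pinching inequality. The pinching factor $\nu$ is picked up when bounding $\Pi_x\rho_E^{-1}\Pi_x \leq c\nu(p(x)\rho_x)^{-1}$, not from the $D_{\mathrm{s}}\!\to\!D_{\mathrm{h}}$ conversion, and the penalty $\sqrt{c\nu|\mathcal{Z}|}$ comes out directly. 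Your proposed chain --- Cauchy--Schwarz, collision-moment expansion, joint convexity, then Lemma~\ref{lemm:D2toDs} --- has a directional problem: Lemma~\ref{lemm:D2toDs} gives a \emph{lower} bound on $\exp D_2^*$, while for achievability you must show $\mathds{E}_h\exp D_2^*(\omega^h\|\tau)$ is \emph{small}. Replacing the reference $\mathds{1}_X\otimes\rho_E$ by a mixture $\lambda\rho_{XE}+(1-\lambda)\mathds{1}_X\otimes\rho_E$ does make $\exp D_2^*$ finite, but it also changes what the $2$-universality expansion computes (the collision moment no longer factors against a tensor-product reference on $Z$), and it is not clear how you then relate the result back to $\|\omega^h - \mathds{1}_Z/|\mathcal{Z}|\otimes\rho_E\|_1$ without reintroducing a smoothing step. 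The paper's per-$x$ truncation is precisely the device that avoids this.

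For the converse, the paper chooses a test on the \emph{output} systems $Z\otimes E$, namely the noncommutative quotient $\Pi = \mathcal{R}^h(\rho_{XE})/(\mathcal{R}^h(\rho_{XE})+c^{-1}\mathcal{U}(\rho_{XE}))$, which depends on the realization of $h$; it then lower-bounds $\Tr[\mathcal{R}^h(\rho_{XE})\Pi]$ by $\exp D_2^*$ of a product of $h$-dependent states, pushes the expectation $\mathds{E}_h$ inside via joint convexity (Lemma~\ref{lemm:joint_convexity}), and only then applies Lemma~\ref{lemm:D2toDs}. Your route starts from the optimal test $T^*$ on $X\otimes E$ witnessing $D_{\mathrm{h}}^{1-\varepsilon-2\delta}$ and asserts that ``pulling $T^*$ through the hash and averaging'' forces $\Tr[(\mathds{1}_X\otimes\rho_E)T^*] \geq |\mathcal{Z}|/C$. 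That last inequality \emph{is} the converse (by definition of $T^*$, $\Tr[(\mathds{1}_X\otimes\rho_E)T^*] = \mathrm{e}^{H_{\mathrm{h}}^{1-\varepsilon-2\delta}}$), so as written the argument is circular. The concrete obstruction is that the security criterion lives on $Z\otimes E$: a test on $X\otimes E$ cannot generally be realized as a pullback $(\mathcal{R}^h)^\dagger$ of a test on $Z\otimes E$, because the hash is many-to-one, and the inequality you need does not follow from averaging. The paper's data-dependent, output-side test is what makes the argument close.
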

In the i.i.d setting, the one-shot characterization from Theorem \ref{theo:one-shot_second-order_PA} combined with the second-order expansion of $H_\textnormal{h}^{\varepsilon}$ leads to the following second-order asymptotics of $\log \ell^{\varepsilon}(X^n{\,|\,}E^n)_{\rho^{\otimes n}}$.

\begin{prop}\label{prop:second-order_PA}Let $\epsilon \in (0,1)$.
For any strongly $2$-universal hash function $h^n: \mathcal{X}^n\to \mathcal{Z}^n$ and \\$\frac{1}{2}\mathbb{E}_{h^n} \left\|({\mathcal{R}^{h^n}}-\mathcal{U}^{\otimes n})(\rho_{XE}^{\otimes n})\right\|_1\leq \epsilon$, we have the asymptotic lower bound:
\begin{align}
\log \ell^{\varepsilon}(X^n {\,|\,} E^n)_{\rho^{\otimes n}} = nH(X{\,|\,}E)_{\rho} + \sqrt{nV(X{\,|\,}E)_{\rho}}\Phi^{-1}(\varepsilon) + O\left(\log n\right).
\end{align}
\end{prop}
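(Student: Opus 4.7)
The plan is to feed the i.i.d. state $\rho_{XE}^{\otimes n}$ into the sandwich bound of Theorem~\ref{theo:one-shot_second-order_PA} and then apply the second-order expansion of the hypothesis-testing divergence in Lemma~\ref{lemm:second}. Set $\sigma_E := \mathds{1}_X \otimes \rho_E$, which is positive definite by the standing invertibility assumption on $\rho_E$. Then, for any admissible $\delta, c$, Theorem~\ref{theo:one-shot_second-order_PA} gives
\begin{align}
-D_\textnormal{h}^{1-\varepsilon+3\delta}\left(\rho_{XE}^{\otimes n} \,\|\, \sigma_E^{\otimes n}\right) - \log \frac{\nu_n^2}{\delta^4} \,\leq\, \log \ell^{\varepsilon}(X^n {\,|\,} E^n)_{\rho^{\otimes n}} \,\leq\, -D_\textnormal{h}^{1-\varepsilon-2\delta}\left(\rho_{XE}^{\otimes n} \,\|\, \sigma_E^{\otimes n}\right) + \log \frac{(1+c)(\varepsilon+c)}{c\,\delta\,(\delta - c)},
\end{align}
where $\nu_n := |\textnormal{\texttt{spec}}(\rho_E^{\otimes n})|$ is bounded by a polynomial in $n$ by Lemma~\ref{lemm:pinching}.

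Next I will choose $\delta = \delta_n := n^{-1/2}$ and $c = c_n := \delta_n/2$. For fixed $\varepsilon \in (0,1)$ and $n$ sufficiently large, these satisfy $0 < c < \delta < (\varepsilon/3) \wedge ((1-\varepsilon)/2)$ as required by the theorem. Under this scaling, each of the slacks $\log(\nu_n^2/\delta_n^4)$ and $\log \tfrac{(1+c_n)(\varepsilon+c_n)}{c_n \delta_n (\delta_n - c_n)}$ is $O(\log n)$ by direct substitution, so both one-shot bounds differ from $-D_\textnormal{h}^{1-\varepsilon \pm O(\delta_n)}(\rho_{XE}^{\otimes n} \,\|\, \sigma_E^{\otimes n})$ only by an $O(\log n)$ term.

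The expansion of $D_\textnormal{h}^{1-\varepsilon \pm \delta_n}(\rho_{XE}^{\otimes n} \,\|\, \sigma_E^{\otimes n})$ is then supplied by Lemma~\ref{lemm:second}, which applies to the non-normalized $\sigma_E$ through the rescaling identity pointed out in its footnote. This yields
\begin{align}
D_\textnormal{h}^{1-\varepsilon \pm \delta_n}\left(\rho_{XE}^{\otimes n} \,\|\, \sigma_E^{\otimes n}\right) = -n H(X{\,|\,}E)_\rho + \sqrt{n V(X{\,|\,}E)_\rho}\;\Phi^{-1}(1-\varepsilon) + O(\log n),
\end{align}
upon identifying $D(\rho_{XE}\,\|\,\sigma_E) = -H(X{\,|\,}E)_\rho$ and $V(\rho_{XE}\,\|\,\sigma_E) = V(X{\,|\,}E)_\rho$. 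Negating and invoking the symmetry $\Phi^{-1}(1-\varepsilon) = -\Phi^{-1}(\varepsilon)$ of the standard normal quantile collapses both the upper and the lower one-shot bound to the common expression $nH(X{\,|\,}E)_\rho + \sqrt{n V(X{\,|\,}E)_\rho}\,\Phi^{-1}(\varepsilon) + O(\log n)$, which is the claim.

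The derivation is essentially a plug-in once Theorem~\ref{theo:one-shot_second-order_PA} is in hand, so I do not anticipate any substantive obstacle. The points meriting a moment of care are (i) verifying that all $\delta$- and $c$-dependent additive slacks coalesce into a single $O(\log n)$ remainder under the chosen $n^{-1/2}$ scaling, and (ii) confirming that Lemma~\ref{lemm:second} may legitimately be applied with the non-normalized reference $\sigma_E = \mathds{1}_X \otimes \rho_E$---both handled by elementary considerations (dimension counting via Lemma~\ref{lemm:pinching} for (i), and the rescaling footnote $D_\textnormal{h}(\rho\,\|\,\lambda\sigma) = D_\textnormal{h}(\rho\,\|\,\sigma) - \log\lambda$ with $V(\rho\,\|\,\lambda\sigma) = V(\rho\,\|\,\sigma)$ for (ii)).
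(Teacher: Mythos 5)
Your proposal is correct and follows exactly the same approach as the paper's proof: plug the i.i.d.\ state into the sandwich bound of Theorem~\ref{theo:one-shot_second-order_PA}, choose $\delta_n = n^{-1/2}$ and $c_n = \delta_n/2$, control the residual additive terms via the polynomial bound on $|\textnormal{\texttt{spec}}(\rho_E^{\otimes n})|$ from Lemma~\ref{lemm:pinching}, and invoke the second-order expansion of Lemma~\ref{lemm:second} together with $\Phi^{-1}(1-\varepsilon) = -\Phi^{-1}(\varepsilon)$. The paper states this very tersely; your write-up simply fills in the arithmetic explicitly.
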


\begin{proof}
    Since $|\textnormal{\texttt{spec}}(\rho_E)|\leq (n+1)^{|\mathcal{H}_E|-1}$ for $|\mathcal{H}_E|$ being the rank of $\rho_E$, the additive terms grow of order $O(\log n)$. We apply the established one-shot characterization, Theorem \ref{theo:one-shot_second-order_PA}, with $\delta = n^{-\sfrac{1}{2}}$ and $c=\frac12 \delta$, together with the second-order expansion of the conditional hypothesis testing entropy, Lemma \ref{lemm:second}, to arrive at the claim.
\end{proof}
{ 
Moreover, the one-shot characterization, can be extended to the moderate deviation regime \cite{CTT2017, CH17}; namely, we derive the optimal rate of the maximal required output dimension when the error approaches zero moderately quickly. 
\begin{prop}[Moderate deviations for privacy amplification] \label{prop:moderate_PA}
For every classical-quantum state $\rho_{XB}$ and any moderate sequence $(a_n)_{n\in\mathds{N}}$ satisfying \eqref{eq:an} and $\varepsilon_n :=  \mathrm{e}^{-na_n^2}$, we have 
\begin{align} 
\begin{dcases}
    \frac1n \log \ell^{\varepsilon_n} (X^n{\,|\,}E^n)_{\rho^{\otimes n}} = H(X{\,|\,}E)_\rho - \sqrt{2 V(X{\,|\,}E) } a_n + o\left(a_n\right);\\
    \frac1n \log \ell^{1-\varepsilon_n} (X^n{\,|\,}E^n)_{\rho^{\otimes n}} = H(X{\,|\,}E)_\rho + \sqrt{2 V(X{\,|\,}E) } a_n + o\left(a_n\right).
\end{dcases}
\end{align}
\end{prop}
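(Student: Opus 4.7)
The plan is to apply the one-shot bound of Theorem~\ref{theo:one-shot_second-order_PA} to $\rho_{XE}^{\otimes n}$ with security parameter $\varepsilon = \varepsilon_n$ (for the first line of the proposition) and $\varepsilon = 1-\varepsilon_n$ (for the second), and then to feed the resulting conditional hypothesis testing entropies into Lemma~\ref{lemm:moderate}. Since $\min(\varepsilon_n, 1-\varepsilon_n) = \varepsilon_n \to 0$, the admissibility condition $0 < c < \delta < \frac{\varepsilon}{3} \wedge \frac{1-\varepsilon}{2}$ forces $\delta_n$ to shrink as fast as $\varepsilon_n$; my working choice is
\begin{equation*}
\delta_n := \varepsilon_n^{2} = \mathrm{e}^{-2 n a_n^2}, \qquad c_n := \tfrac{1}{2}\delta_n,
\end{equation*}
which is admissible for all large $n$.

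The verification then splits into two independent checks. First, all additive remainders in Theorem~\ref{theo:one-shot_second-order_PA} should be $o(n a_n)$: using $\nu_n \le (n+1)^{\dim\mathcal{H}_E - 1}$ from Lemma~\ref{lemm:pinching}, the term $\log(\nu_n^2/\delta_n^4)$ is $O(\log n) + O(n a_n^2)$, and the upper-bound corrections $\log\bigl((1+c_n)/(c_n\delta_n)\bigr) + \log\bigl((\varepsilon_n+c_n)/(\delta_n-c_n)\bigr)$ are likewise $O(n a_n^2)$. The hypotheses $a_n \to 0$ and $n a_n^2 \to \infty$ give $n a_n^2 = a_n \cdot (n a_n) = o(n a_n)$ and $\log n = o(n a_n)$, so dividing by $n$ all additive corrections contribute only $o(a_n)$. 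Second, the perturbed confidence levels must still define a moderate sequence: for $\varepsilon = \varepsilon_n$, the shifted parameter $1 - \varepsilon_n \pm O(\delta_n) = 1 - \varepsilon_n(1 + o(1))$ can be written as $1 - \mathrm{e}^{-n (a_n')^2}$ with $a_n' = a_n(1 + o(1))$, so the first line of Lemma~\ref{lemm:moderate} applied to $(\rho_{XE}, \mathds{1}_X \otimes \rho_E)$ yields
\begin{equation*}
\frac{1}{n} H_{\textnormal{h}}^{1-\varepsilon_n \pm O(\delta_n)}(X^n{\,|\,}E^n)_{\rho^{\otimes n}} = H(X{\,|\,}E)_\rho - \sqrt{2 V(X{\,|\,}E)_\rho}\, a_n + o(a_n).
\end{equation*}
Sandwiching via Theorem~\ref{theo:one-shot_second-order_PA} then gives the first asymptotic. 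For $\varepsilon = 1-\varepsilon_n$, the shifted parameter becomes $\varepsilon_n \pm O(\delta_n)$, and the second line of Lemma~\ref{lemm:moderate} produces the analogous expansion with the opposite sign in front of $\sqrt{2V(X{\,|\,}E)_\rho}\, a_n$, yielding the second asymptotic.

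The only genuine obstacle is balancing two opposing tendencies in the choice of $\delta_n$ and $c_n$: they must vanish fast enough that the $\pm O(\delta_n)$ shift of the confidence level is absorbed into the moderate-deviation asymptotic, yet not so fast that the additive corrections $\log(1/\delta_n)$ and $\log(1/c_n)$ exceed the budget $o(n a_n)$. The super-exponential choice $\delta_n = \mathrm{e}^{-2 n a_n^2}$ is engineered precisely so that both sources of error sit at the scale $n a_n^2$, which is $o(n a_n)$ exactly because $(a_n)$ is a moderate sequence; the rest is straightforward bookkeeping.
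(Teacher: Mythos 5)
Your proof is correct and follows essentially the same route as the paper's: invoke Theorem~\ref{theo:one-shot_second-order_PA} with vanishing parameters $\delta_n, c_n$, observe that the additive corrections are $o(na_n)$, rewrite the shifted confidence level as $1-\mathrm{e}^{-n(a_n')^2}$ for a perturbed moderate sequence $a_n'=a_n(1+o(1))$, and conclude via Lemma~\ref{lemm:moderate}. The only difference is cosmetic: the paper uses the linear shrinkage $\delta_n=\tfrac14\varepsilon_n$ (which gives $a_n'=a_n+o(a_n)$ with a constant-factor perturbation absorbed via the cited argument), while you use the quadratic $\delta_n=\varepsilon_n^2$, which makes the perturbation factor $1+o(1)$ rather than $\Theta(1)$ at the price of doubling the additive $na_n^2$ correction; both sit comfortably within the $o(na_n)$ budget, so the two choices are interchangeable.
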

\begin{proof}
We prove the expansion for $\ell^{\varepsilon_n} (X^n{\,|\,}E^n)$. For every moderate deviation sequence $(a_n)_{n\in\mathds{N}}$, we let $\delta_n = \frac{1}{4}\e^{-na_n^2}$ satisfying \eqref{eq:an} and let $c_n = \frac{1}{2}\delta_n$. Then, $\varepsilon_n +2\delta_n$ or $\varepsilon_n -3\delta_n$ can be viewed as another $e^{-n b_n^2}$ for another moderate deviation sequence $(b_n)_{n\in\mathds{N}}$; we have $b_n = a_n + (b_n-a_n) = o(a_n)$ (see e.g.~\cite[\S 7.2]{CHD+21}), $\frac{1}{n}\log \delta_n = -a_n^2=o(a_n)$ and $\frac{1}{n}\log|\textnormal{\texttt{spec}}(\rho_E^{\otimes n})| = O\left(\frac{\log n}{n}\right) = o(a_n)$.
Then, applying Theorem \ref{theo:one-shot_second-order_PA} with Lemma \ref{lemm:moderate} leads to our first claim of the moderate derivation for privacy amplification. The second line follows similarly.
\end{proof}
}

We prove the lower bound and upper bound of $\log \ell^\varepsilon(X:E)_\rho $ in Section \ref{sec:lower} and Section \ref{sec:upper}, respectively.

\subsection{Direct Bound}\label{sec:lower}
We prove the lower bound on $\log \ell^\varepsilon(X{\,|\,}E)_\rho $ here.
\begin{proof}[Proof of achievability in {Theorem}~\ref{theo:one-shot_second-order_PA}]

We first claim that for any $c>0$ and strongly 2-universal hash function $h:\mathcal{X}^n \to \mathcal{Z}^n$,
\begin{align}
\frac{1}{2}\mathds{E}_{h} \left\| \mathcal{R}^h(\rho_{XE})-\mathcal{U}(\rho_{XE}) \right\|_1 \leq \tr\left[\rho_{XE}\left\{ \mathscr{P}_{\rho_E}[\rho_{XE}] > c \mathds{1}_X\otimes \rho_E \right\}\right] + \sqrt{ c|\textnormal{\texttt{spec}}(\rho_E)||\mathcal{Z}| }\,.\label{eq:pmain}
\end{align}
Let $\delta \in (0,\varepsilon)$ and let
\begin{align}
c &= \exp\left\{D_\text{s}^{1-\varepsilon+ \delta}(\mathscr{P}_{\rho_E}[\rho_{XE}]  \,\|\,\mathds{1}_X\otimes\rho_E) + \xi\right\},
\end{align}
for some small $\xi >0$. Then, by definition of the information spectrum divergence, \eqref{eq:Ds}, we have
\begin{align}
\tr\left[\rho_{XE}\left\{\mathscr{P}_{\rho_E}[\rho_{XE}]> c\mathds{1}_X\otimes \rho_E\right\}\right] < \varepsilon - \delta\,. \label{eq:pmain0}
\end{align}
Choose
$|\mathcal{Z}| = \left\lfloor {\frac{\delta^{2}}{c|\texttt{spec}(\rho_E)|}} \right\rfloor$. Then, by \eqref{eq:pmain} and \eqref{eq:pmain0},
the $\varepsilon$-secret criterion is satisfied, i.e.~
\begin{align}
\frac{1}{2}\mathds{E}_{h} \left\| \mathcal{R}^h(\rho_{XE})-\mathcal{U}(\rho_{XE}) \right\|_1 \leq \varepsilon\,,
\end{align}
By the definition of \eqref{def:maxl}, we have the following lower bound on $\log \ell^{\varepsilon}(X{\,|\,}E)_{\rho}$:
\begin{align}
\log \ell^{\varepsilon}(X{\,|\,}E)_{\rho} &\geq -D_\text{s}^{1-\varepsilon+ \delta}(\mathscr{P}_{\rho_E}[\rho_{XE}]  \,\|\,\mathds{1}_X\otimes\rho_E) - \xi - \log|\texttt{spec}(\rho_E)| + 2\log \delta\\
& \geq H_h^{1-\varepsilon + 3\delta}(X{\,|\,}B)_{\rho} - \xi -2\log|\texttt{spec}(\rho_E)| + 4\log \delta,
\end{align}
where we have used Lemma \ref{lemm:relation} in the last inequality. Since $\xi>0$ is arbitrary, taking $\xi\to 0$ gives our claim of lower bound in Theorem~\ref{theo:one-shot_second-order_PA}. 

\medskip
Now we move on to prove \eqref{eq:pmain}.
We first consider case where $\rho^x_E$ is invertible for each $x\in\mathcal{X}$ and later argue that the general case follows from approximation. 
Shorthand $p\equiv p_X$ and $\rho_x \equiv \rho_E^x$. For every $x\in\mathcal{X}$, we take the projection
\begin{align}
\Pi_x &= \left\{\mathscr{P}_{\rho_E}[p(x)\rho_x]\leq c\rho_E\right\}; \\
\Pi^\mathrm{c}_x&:= \mathds{1}_E-\Pi_x.
\end{align}
Observe that for every $z\in\mathcal{Z}$, we have 
\begin{align}
\mathds{E}_{h}\left[\sum_{x:h(x)=z}p(x)\rho_x\right] =  \frac{1}{|\mathcal{Z}|}\rho_E.
\end{align}
We then use the fact that the Schatten $1$-norm $\|\cdot\|_1$ is additive for direct sums to calculate
\begin{align}
&\frac{1}{2}\mathds{E}_{h}\left\|\mathcal{R}^h(\rho_{XE})-\mathcal{U}(\rho_{XE}) \right\|_1
\\&=  
\frac{1}{2}\mathds{E}_{h} \left\|\sum_z\ket{z}\bra{z}\otimes\left(\sum_{x:h(x)=z}p(x)\rho_x\right)- \frac{1}{|\mathcal{Z}|}\sum_{z}\ket{z}\bra{z}\otimes\rho_E\right\|_1 \\
& = \sum_{z\in \mathcal{Z}} \frac{1}{2}\mathds{E}_{h} \left\|\sum_{x:h(x)=z}p(x)\rho_x- \frac{1}{|\mathcal{Z}|}\rho_E\right\|_1\\
& = \sum_{z\in \mathcal{Z}} \frac{1}{2}\mathds{E}_{h} \left\|\sum_{x:h(x)=z}p(x)\rho_x- \mathds{E}_{h}\left[\sum_{x:h(x)=z}p(x)\rho_x\right]\right\|_1\\
&= \sum_{z\in \mathcal{Z}} \frac{1}{2}\mathds{E}_{h} \left\|\sum_{x:h(x)=z}p(x)\rho_x(\Pi^{\mathrm{c}}_x+\Pi_x)- \mathds{E}_h\left[\sum_{x:h(x)=z}p(x)\rho_x(\Pi^{\mathrm{c}}_x+\Pi_x)\right]\right\|_1\\
& \overset{(a)}{\leq} \sum_{z\in \mathcal{Z}} \frac{1}{2}\mathds{E}_{h} \left\|\sum_{x:h(x)=z} p(x)\rho_x\Pi^{\mathrm{c}}_x- \mathds{E}_h\left[\sum_{x:h(x)=z} p(x)\rho_x\Pi^{\mathrm{c}}_x\right]\right\|_1\\
&\quad + \sum_{z\in \mathcal{Z}} \frac{1}{2}\mathds{E}_{h} \left\|\sum_{x:h(x)=z} p(x)\rho_x\Pi_x- \mathds{E}_h\left[\sum_{x:h(x)=z}  p(x)\rho_x\Pi_x\right]\right\|_1\\
\begin{split}
&\leq  \sum_{z\in \mathcal{Z}} \frac{1}{2}\mathds{E}_{h} \left\|\sum_{x:h(x)=z} \Pi^{\mathrm{c}}_xp(x)\rho_x\Pi^{\mathrm{c}}_x- \mathds{E}_h\left[\sum_{x:h(x)=z} \Pi^{\mathrm{c}}_x p(x)\rho_x\Pi^{\mathrm{c}}_x\right]\right\|_1\\
&\quad + \sum_{z\in \mathcal{Z}} \frac{1}{2}\mathds{E}_{h} \left\|\sum_{x:h(x)=z} \Pi_xp(x)\rho_x\Pi^{\mathrm{c}}_x- \mathds{E}_h\left[\sum_{x:h(x)=z} \Pi_x p(x)\rho_x\Pi^{\mathrm{c}}_x\right]\right\|_1\\
& \quad + \sum_{z\in \mathcal{Z}} \frac{1}{2}\mathds{E}_{h} \left\|\sum_{x:h(x)=z} p(x)\rho_x\Pi_x- \mathds{E}_h\left[\sum_{x:h(x)=z} p(x)\rho_x\Pi_x\right]\right\|_1, 
\end{split} \label{eq:123}
\end{align}
where (a) follows from the triangle inequality of the Schatten $1$-norm $\|\cdot\|_1$, and we then use triangle inequality again to decompose the first term at (a) to arrive at \eqref{eq:123}.

We bound the three terms of \eqref{eq:123} as follows.
The first term of \eqref{eq:123} is bounded by
\begin{align}
&\sum_{z\in \mathcal{Z}} \frac{1}{2}\mathds{E}_{h} \left\|\sum_{x:h(x)=z}\Pi^{\mathrm{c}}_xp(x)\rho_x\Pi^{\mathrm{c}}_x- \mathds{E}_h\left[\sum_{ x:h(x)=z }\Pi^{\mathrm{c}}_x p(x)\rho_x\Pi^{\mathrm{c}}_x\right]\right\|_1\\
&\leq \sum_{z\in \mathcal{Z}} \mathds{E}_{h} \left\|\sum_{x:h(x)=z}\Pi^{\mathrm{c}}_x\left(p(x)\rho_x\right)\Pi^{\mathrm{c}}_x\right\|_1\\
& = \sum_{z\in \mathcal{Z}} \mathds{E}_{h} \tr\left[\sum_{x\in \mathcal{X}}\mathbf{1}_{h(x)=z}\ket{x}\bra{x}\otimes \left(p(x)\rho_x\{\mathscr{P}_{\rho_E}[p(x)\rho_x]>c\rho_E\}\right)\right]\\
& =  \mathds{E}_{h} \tr\left[\sum_{x\in \mathcal{X}} \ket{x}\bra{x}\otimes \left(p(x)\rho_x\{\mathscr{P}_{\rho_E}[p(x)\rho_x]>c\rho_E\}\right)\right]\\
&= \tr\left[\left(\sum_{x\in\mathcal{X}}\ket{x}\bra{x}\otimes (p(x)\rho_x)\right)\left(\sum_{x\in\mathcal{X}}\ket{x}\bra{x}\otimes \{\mathscr{P}_{\rho_E}[p(x)\rho_x]>c\rho_E\}\right)\right]\\
&= \tr\left[\rho_{XE}\{\mathscr{P}_{\rho_E}[\rho_{XE}]>c\mathds{1}_X\otimes\rho_E\}\right]. \label{eq:123_1}
\end{align}
Let us shorthand for notational convenience
\begin{align}
    H_{h,z} := \sum_{x:h(x)=z}\Pi_xp(x)\rho_x\Pi^{\mathrm{c}}_x- \mathds{E}_h\left[\sum_{x:h(x)=z}\Pi_x p(x)\rho_x\Pi^{\mathrm{c}}_x\right].
\end{align}
The second term of \eqref{eq:123} can be bounded by:
\begin{align}
\sum_{z\in \mathcal{Z}} \frac{1}{2}\mathds{E}_{h} \left\|H_{h,z}\right\|_1
&= \sum_{z\in \mathcal{Z}} \frac{1}{2}\mathds{E}_h \tr \sqrt{ H_{h,z}^\dagger H_{h,z}}\\
&\overset{(a)}{\leq} \sum_{z\in \mathcal{Z}} \frac{1}{2} \tr \sqrt{\mathds{E}_h \left[ H_{h,z}^{\dagger} H_{h,z}\right] }\\
&\overset{(b)}{=} \sum_{z\in \mathcal{Z}} \frac{1}{2}\Tr \sqrt{ \mathrm{Var}_{h} \left[ \sum_{x\in\mathcal{X}} \mathbf{1}_{\{h(x)=z\}} \Pi^{\mathrm{c}}_xp(x)\rho_x\Pi_x \right] }\\
&\overset{(c)}{=} \sum_{z\in \mathcal{Z}} \frac{1}{2}\Tr \sqrt{ \sum_{x\in\mathcal{X}}\mathrm{Var}_{h} \left[ \mathbf{1}_{\{h(x)=z\}} \Pi^{\mathrm{c}}_xp(x)\rho_x\Pi_x \right] }\\
&\overset{(d)}{\leq} \sum_{z\in \mathcal{Z}} \frac{1}{2}\Tr \sqrt{ \sum_{x\in\mathcal{X}}\mathds{E}_{h} \left[ \mathbf{1}_{\{h(x)=z\}} \Pi_x p(x)\rho_x \Pi^{\mathrm{c}}_x p(x)\rho_x \Pi_x \right] }\\
&\overset{(e)}{\leq} \sum_{z\in \mathcal{Z}} \frac{1}{2}\Tr \sqrt{ \sum_{x\in\mathcal{X}}\mathds{E}_{h} \left[ \mathbf{1}_{\{h(x)=z\}} \Pi_x(p(x)\rho_x)^2\Pi_x \right] } \\
&\overset{(f)}{\leq} \sum_{z\in \mathcal{Z}} \frac{1}{2}\Tr \sqrt{ \sum_{x\in\mathcal{X}} \frac{1}{|\mathcal{Z}|} \Pi_x(p(x)\rho_x)^2\Pi_x  }. \label{eq:123_2}
\end{align}
Here, in (a) we applied Jensen's inequality, Lemma~\ref{lemm:Jensen}, with expectation $\mathds{E}_h$ and the operator concavity of square-root;
	in (b) we denoted a \emph{matrix-valued variance} for a random matrix $H_h$ as $\mathrm{Var}[H_{h}]:= \mathds{E}[H_{h}^\dagger H_{h}] - (\mathds{E}[H_{h}])^\dagger \mathds{E}[H_{h}]$ (see e.g.~\cite[\S 2]{Tro15});
	in (c) we applied pairwise independent property of the strongly $2$-universal hash function (on each $x\in\mathcal{X}$);
	in (d) we used the operator monotone of square-root and $\mathrm{Var}[H_{h}] \leq \mathds{E}[H_{h}^\dagger H_{h}]$;
	(e) follows from $\Pi_x^\mathrm{c} \leq \mathds{1}_E$ and the operator monotonicity of square-root;
	in (f) we used the uniformity of the random hash function $h$.

The third term of \eqref{eq:123} can be bounded similar to the second term as:
\begin{align}
&\sum_{z\in \mathcal{Z}} \frac{1}{2}\mathds{E}_{h} \left\|\sum_{x\in\mathcal{X}}\mathbf{1}_{h(x)=z}p(x)\rho_x\Pi_x- \mathds{E}_h\left[\sum_{x\in\mathcal{X}}\mathbf{1}_{h(x)=z}p(x)\rho_x\Pi_x\right]\right\|_1\\
&\leq \sum_{z\in \mathcal{Z}} \frac{1}{2}\Tr \sqrt{ \sum_{x\in\mathcal{X}} \frac{1}{|\mathcal{Z}|} \Pi_x(p(x)\rho_x)^2\Pi_x  }.  \label{eq:123_3}
\end{align}

Now by \eqref{eq:123_2} and \eqref{eq:123_3}, we bound the sum of second and third term of \eqref{eq:123} as:
\begin{align}
&\sum_{z\in \mathcal{Z}} \Tr \sqrt{ \sum_{x\in\mathcal{X}}  \frac{1}{|\mathcal{Z}|} \Pi_x(p(x)\rho_x)^2\Pi_x }\\
&= \sum_{z\in \mathcal{Z}} \Tr \mathscr{P}_{\rho_E}\sqrt{ \sum_{x\in\mathcal{X}}  \frac{1}{|\mathcal{Z}|} \Pi_x(p(x)\rho_x)^2\Pi_x }\\
&\overset{(a)}{\leq} \sum_{z\in \mathcal{Z}} \Tr \sqrt{\mathscr{P}_{\rho_E} \left[\sum_{x\in\mathcal{X}} \frac{1}{|\mathcal{Z}|} \Pi_x(p(x)\rho_x)^2\Pi_x \right] }\\
&\overset{(b)}{=} \sum_{z\in \mathcal{Z}} \Tr\left[ \rho_E\sqrt{\mathscr{P}_{\rho_E}\left[ \sum_{x\in\mathcal{X}} \frac{1}{|\mathcal{Z}|}\Pi_x(p(x)\rho_x)^2\Pi_x \right] \rho_E^{-2}}\right]\\
&\overset{(c)}{\leq} \sum_{z\in \mathcal{Z}}  \sqrt{\Tr \left[ \sum_{x\in\mathcal{X}}  \frac{1}{|\mathcal{Z}|} \Pi_x(p(x)\rho_x)^2\Pi_x \rho_E^{-1} \right] }\\
& = \sum_{z\in \mathcal{Z}}  \sqrt{\sum_{x\in\mathcal{X}} \frac{1}{|\mathcal{Z}|} \Tr \left[ (p(x)\rho_x)^2\Pi_x  \rho_E^{-1}\Pi_x\right] }\\
& \overset{(d)}{\leq} \sum_{z\in \mathcal{Z}}  \sqrt{ \sum_{x\in\mathcal{X}} \frac{1}{|\mathcal{Z}|} \Tr \left[ (p(x)\rho_x)^2 c|\textnormal{\texttt{spec}}(\rho_E)|(p(x)\rho_x)^{-1}\right] }\\
& = |\mathcal{Z}|\sqrt{\frac{c}{|\mathcal{Z}|}|\textnormal{\texttt{spec}}(\rho_E)|}\\
& = \sqrt{c|\mathcal{Z}||\textnormal{\texttt{spec}}(\rho_E)|}. \label{eq:123_4}
\end{align}
where (a) follows from Jensen's inequality, Lemma~\ref{lemm:Jensen}, with the pinching map $\mathscr{P}_{\rho_E}$ and the operator concavity of square-root; 
(b) holds because now every term is commuting with $\rho_E$;
(c) follows from Jensen's inequality, Lemma~\ref{lemm:Jensen}, with the functional $\Tr[\rho_E (\, \cdot \,)]$ and the operator concavity of square-root;
in (d) we use the fact that
 \begin{align}
 	\Pi_x \rho_E^{-1} \Pi_x &\leq c \, \Pi_x \left(\mathscr{P}_{\rho_E}[p(x)\rho_x]\right)^{-1} \Pi_x \\
 	&\leq c\left(\mathscr{P}_{\rho_E}[p(x)\rho_x]\right)^{-1} \\
 	&\leq c \left|\textnormal{\texttt{spec}}(\rho_E)\right| (p(x)\rho_x)^{-1},  \label{eq:one-shot_second-order6}
 \end{align}
since 
\begin{align}
	\Pi_x = \left\{ \mathscr{P}_{\rho_E}[p(x)\rho_x] \leq c \rho_E \right\} = \left\{ \rho_E^{-1} \leq c \left(\mathscr{P}_{\rho_E}[p(x)\rho_x]\right)^{-1} \right\},
\end{align}
and we invoke the operator monotonicity of matrix inversion together with the pinching inequality (Lemma~\ref{lemm:pinching}), i.e.~\begin{align}
\mathscr{P}_{\rho_E}[p(x)\rho_x] \geq \frac{p(x)\rho_x }{|\texttt{spec}(\rho_E)|}.
\end{align}
The statement \eqref{eq:pmain} is proved by combining  \eqref{eq:123}, \eqref{eq:123_1}, \eqref{eq:123_2}, \eqref{eq:123_3}, and \eqref{eq:123_4}.

For the general non-invertible states $\{\rho_x\}_x$ we define the approximation
\begin{align}
\rho_x^\epsilon:= (1-\epsilon)\rho_x + \epsilon\frac{\mathds{1}_E}{|\mathcal{H}_E|},
\end{align}
and then 
	\[\rho_E^\epsilon := (1-\epsilon)\rho_E+\epsilon\frac{\mathds{1}_{E}}{|\mathcal{H}_E|}\ ,\  \rho_{XE}^\epsilon=(1-\epsilon)\rho^\epsilon_{XE}+\epsilon \rho_X \ten \frac{\mathds{1}_{E}}{|\mathcal{H}_E|}. \]
Moreover, since $\mathscr{P}_{\rho_E}$ is unital completely positive and trace-preserving, 
\begin{align}
\mathscr{P}_{\rho_E}\left[\rho_{XE}^\epsilon\right]:= (1-\epsilon)\mathscr{P}_{\rho_E}\left[\rho_{XE}\right] + \epsilon\rho_X\otimes\frac{\mathds{1}_E}{|\mathcal{H}_E|}.
\end{align}
It is clear that
\begin{align}
\lim_{\epsilon\to 0}\frac{1}{2}\mathds{E}_{h} \left\| \mathcal{R}^h(\rho_{XE}^\epsilon)-\mathcal{U}(\rho_{XE}^\epsilon) \right\|_1 = \frac{1}{2}\mathds{E}_{h} \left\| \mathcal{R}^h(\rho_{XE})-\mathcal{U}(\rho_{XE}) \right\|_1.
\end{align}
For the right-hand side of \eqref{eq:pmain}, we have
\begin{align}
\left\|\mathscr{P}_{\rho_E}\left[\rho_{XE}^\epsilon\right] - \mathscr{P}_{\rho_E}[\rho_{XE}]\right\|_1 \leq 2\epsilon
\end{align}
and for small enough $\epsilon$, the projection 
\begin{align}
\left\{ \mathscr{P}_{\rho_E}[(\rho_{XE})^{\epsilon}] > c \mathds{1}_X\otimes (\rho_E)^{\epsilon} \right\} = \left\{ \mathscr{P}_{\rho_E}[\rho_{XE}] > c \mathds{1}_X\otimes \rho_E \right\}.
\end{align}
In fact, because $\mathscr{P}_{\rho_E}[(\rho_{XE})]$ and $\mathds{1}_X\otimes \rho_E$ are commutative, they can be viewed as functions on the finite set of spectrum. Therefore, we have
\begin{align}
\lim_{\epsilon\to 0} \tr\left[(\rho_{XE})^{\epsilon}\left\{ \mathscr{P}_{\rho_E}[(\rho_{XE})^{\epsilon}] > c \mathds{1}_X\otimes (\rho_E)^{\epsilon} \right\}\right] = \tr\left[\rho_{XE}\left\{ \mathscr{P}_{\rho_E}[\rho_{XE}] > c \mathds{1}_X\otimes \rho_E \right\}\right],
\end{align}
which proves \eqref{eq:pmain} by approximation.
\end{proof}
\subsection{Converse Bound}\label{sec:upper}
We prove the upper bound on $\log \ell^\varepsilon(X{\,|\,}E)_\rho $ here.
\begin{proof}[Proof of converse of {Theorem}~\ref{theo:one-shot_second-order_PA}]
We denote $p \equiv p_X$. For every $0<c<\delta < \frac{1-\varepsilon}{2}$, and for any realization of the random hash function $h$, we choose the noncommutative quotient
\begin{align}
    \Pi = \frac{\mathcal{R}^h(\rho_{XE})}{\mathcal{R}^h(\rho_{XE}) + c^{-1}\cdot \mathcal{U}(\rho_{XE})}.
\end{align}
For every $\varepsilon$-secret randomness extractor, we calculate
\begin{align}
\varepsilon &\geq \frac{1}{2}\mathbb{E}_h \left\|(\mathcal{R}^h-\mathcal{U})(\rho_{XE})\right\|_1\\
&\overset{(a)}{\geq} \mathbb{E}_h \Tr\left[(\mathcal{R}^h-\mathcal{U})(\rho_{XE})\frac{\mathcal{R}^h(\rho_{XE})}{\mathcal{R}^h(\rho_{XE}) + c^{-1} \mathcal{U}(\rho_{XE})}\right]\\
&\overset{(b)}{\geq} \mathbb{E}_h \Tr\left[\mathcal{R}^h(\rho_{XE})\frac{\mathcal{R}^h(\rho_{XE})}{\mathcal{R}^h(\rho_{XE}) + c^{-1}\mathcal{U}(\rho_{XE})}\right]-c, \label{eq:Priv1}
\end{align}
where in (a) we used the lower bound on trace distance, Lemma \ref{lemm:1norm}, with $\Pi$, and in (b) we applied Lemma~\ref{lemm:quiotient}-\ref{item:d} to obtain the following estimation:
\begin{align*} 
\Tr\left[\mathcal{U}(\rho_{XE})\frac{\mathcal{R}^h(\rho_{XE})}{\mathcal{R}^h(\rho_{XE}) + c^{-1} \mathcal{U}(\rho_{XE})}\right] 
&= c\Tr\left[\mathcal{R}^h(\rho_{XE}) \frac{c^{-1}\mathcal{U}(\rho_{XE})}{\mathcal{R}^h(\rho_{XE}) + c^{-1} \mathcal{U}(\rho_{XE})}\right]\\ 
&\le c\Tr\left[\mathcal{R}^h(\rho_{XE})\right]\\
&= c\pl. 
\end{align*}
For every $z\in\mathcal{Z}$, we define
\begin{align}
\sigma_{zXE}:=& \ket{z}\bra{z}\otimes\left(\sum_{x:\, h(x)=z}\ket{x}\bra{x}\otimes p(x)\rho^x_E\right);\\
\tau_{zXE}:=& \left(\sum_{x:\, h(x) = z}\ket{x}\bra{x}\right) \otimes \left(\mathcal{R}^h(\rho_{XE})+c^{-1}\mathcal{U}(\rho_{XE}))\right).
\end{align}
Since $\mathcal{R}^h(\rho_{XE})= \sum_{z\in \mathcal{Z} }\sigma_{zE}$ for $\sigma_{zE} = \Tr_{X}[\sigma_{zXE}] := |z\rangle \langle z|\otimes \sum_{x:\, h(x)=z} p(x)\rho_E^x$,
we calculate
\begin{align}
\mathbb{E}_h \Tr\left[\mathcal{R}^h(\rho_{XE})\frac{\mathcal{R}^h(\rho_{XE})}{\mathcal{R}^h(\rho_{XE}) + c^{-1} \mathcal{U}(\rho_{XE})}\right] 
&=\sum_{z\in \mathcal{Z}} \mathbb{E}_h\Tr\left[\sigma_{zE}\frac{\mathcal{R}^h(\rho_{XE})}{\mathcal{R}^h(\rho_{XE})+ c^{-1}\mathcal{U}(\rho_{XE})}\right] \\
&\overset{(a)}{\geq}  \sum_{z\in \mathcal{Z}}\mathbb{E}_h\Tr\left[\sigma_{zE}\frac{\sigma_{zE}}{\mathcal{R}^h(\rho_{XE})+ c^{-1}\mathcal{U}(\rho_{XE})}\right]\\
&= \sum_{z\in \mathcal{Z}}\sum_{x:h(x)=z}\mathbb{E}_h\Tr\left[\ket{z}\bra{z}\ten p(x)\rho^x_E\frac{ \sigma_{zE} }{\mathcal{R}^h(\rho_{XE})+ c^{-1}\mathcal{U}(\rho_{XE})}\right]\\
&\overset{(b)}{\geq} \sum_{z\in \mathcal{Z}}\sum_{x:h(x)=z}\mathbb{E}_h\Tr\left[\ket{z}\bra{z}\ten p(x)\rho^x_E\frac{\ket{z}\bra{z}\ten p(x)\rho^x_E}{\mathcal{R}^h(\rho_{XE})+ c^{-1}\mathcal{U}(\rho_{XE})}\right]\\
&= \sum_{z\in \mathcal{Z}}\mathbb{E}_h \exp{D_2^*(\sigma_{zXE}  \,\|\,\tau_{zXE})} \\
&\overset{(c)}{\geq} \sum_{z\in \mathcal{Z}}\exp{D_2^*\left(\mathbb{E}_h \left[ \sigma_{zXE} \right]  \|\,\mathbb{E}_h \left[ \tau_{zXE}\right] \right)}, \label{eq:Priv2}
\end{align}
where 
in (a) we used $\mathcal{R}^h(\rho_{XE}) \geq \sigma_{zE}$ for all $z\in\mathcal{Z}$ and Lemma~\ref{lemm:quiotient}-\ref{item:a} \& \ref{item:b};
in (b) we used that for every $x, z$ s.t.~$h(x) = z$, $\sigma_{zE} \geq \ket{z}\bra{z}\otimes p(x)\rho^x_E$ and Lemma~\ref{lemm:quiotient} again;
and in (c) we used the joint convexity of $\exp D_2^*(\,\cdot\|\,\cdot)$, Lemma~\ref{lemm:joint_convexity}.
Then, we exploit the uniformity and independence of the random hash function to calculate the expectations:
\begin{align}
\mathbb{E}_h \left[ \sigma_{zXE} \right]
&= \frac{1}{|\mathcal{Z}|}\ket{z}\bra{z}\otimes\rho_{XE};\\
\mathbb{E}_h \left[ \left(\sum_{x:\, h(x) = z}\ket{x}\bra{x}\right)\otimes \mathcal{R}^h(\rho_{XE}) \right]
&= \mathds{E}_h \sum_{x,\bar{x},\bar{z}} |x\rangle \langle x| \otimes \mathbf{1}_{\{h(x)=z\}} \mathbf{1}_{\{h(\bar{x})=\bar{z}\}} |\bar{z}\rangle\langle\bar{z}| \otimes p(\bar{x}) \rho_E^{\bar{x}} \\
&= \sum_{x,\bar{x},\bar{z}} |x\rangle \langle x| \otimes \left[ \frac{1}{|\mathcal{Z}|} \mathbf{1}_{\{x=\bar{x}\}} \mathbf{1}_{\{z = \bar{z}\}} + \frac{1}{|\mathcal{Z}|^2} \mathbf{1}_{\{x\neq \bar{x}\}} \right] |\bar{z}\rangle\langle\bar{z}| \otimes p(\bar{x}) \rho_E^{\bar{x}} \\
&=\frac{1}{|\mathcal{Z}|}\ket{z}\bra{z}\otimes\rho_{XE} + \frac{1}{|\mathcal{Z}|^2} \mathds{1}_Z \otimes (\mathds{1}_X\otimes\rho_E-\rho_{XE});\\
\mathbb{E}_h \left[ \left(\sum_{x:\, h(x) = z}\ket{x}\bra{x}\right)\otimes c^{-1} \mathcal{U}(\rho_{XE}) \right]
&= c^{-1}\cdot\frac{\mathds{1}_Z}{|\mathcal{Z}|}\otimes\frac{\mathds{1}_X}{|\mathcal{Z}|}\otimes\rho_E.
\end{align}
Here, a crucial observation is that we can employ the direct-sum structure of $\mathds{1}_Z$ together with the definition of the collision diversion, \eqref{eq:sand}, to rewrite \eqref{eq:Priv2} as follows, i.e.~for every $z\in\mathcal{Z}$,
\begin{align}
&\exp{D_2^*\left(\mathbb{E}_h \left[ \sigma_{zXE} \right]  \,\|\,\mathbb{E}_h \left[ \tau_{zXE}\right] \right)} \\ \notag \\
    = &\exp D_2^* \left( \left. \frac{|z\rangle\langle z|}{|\mathcal{Z}|}\otimes \rho_{XE} \,\right\| \frac{|z\rangle\langle z|}{|\mathcal{Z}|}\otimes \rho_{XE} + \frac{\mathds{1}_Z}{|\mathcal{Z}|^2}  \otimes \left( (1+c^{-1}) \mathds{1}_X\otimes \rho_E - \rho_{XE}\right)    \right) \\
    =& \exp D_2^* \left( \left. \frac{|z\rangle\langle z|}{|\mathcal{Z}|}\otimes \rho_{XE} \,\right\| \frac{|z\rangle\langle z|}{|\mathcal{Z}|}\otimes \rho_{XE} + \frac{|z\rangle \langle z |}{|\mathcal{Z}|^2}  \otimes \left( (1+c^{-1}) \mathds{1}_X\otimes \rho_E - \rho_{XE}\right) \right) \\
    =& \exp D_2^* \left( \left. |\mathcal{Z}|^{-1}\rho_{XE} \,\right\|  |\mathcal{Z}|^{-1}\rho_{XE} + |\mathcal{Z}|^{-2}  \left( (1+c^{-1}) \mathds{1}_X\otimes \rho_E - \rho_{XE}\right) \right). \label{eq:Priv2_5}
\end{align}
Hence, \eqref{eq:Priv2} and \eqref{eq:Priv2_5} show that
\begin{align}
    &\sum_{z\in \mathcal{Z}}\exp{D_2^*\left(\mathbb{E}_h \left[ \sigma_{zXE} \right]  \|\,\mathbb{E}_h \left[ \tau_{zXE}\right] \right)}
    \\ =& \sum_{z\in \mathcal{Z}} \exp D_2^* \left( \left. |\mathcal{Z}|^{-1}\rho_{XE} \,\right\|  |\mathcal{Z}|^{-1}\rho_{XE} + |\mathcal{Z}|^{-2}  \left( (1+c^{-1}) \mathds{1}_X\otimes \rho_E - \rho_{XE}\right) \right) \\
    =& \exp D^*_2\left(\rho_{XE} \,\| \left(1-|\mathcal{Z}|^{-1}\right)\rho_{XE}+ |\mathcal{Z}|^{-1}(1+c^{-1}) \mathds{1}_X\ten \rho_E\right) \\
    {\geq}& (\delta + \varepsilon)\left(1-\frac{1}{|\mathcal{Z}|}+ \frac{1+c^{-1}}{|\mathcal{Z}|}\e^{-D_\text{s}^{1-\varepsilon-\delta}(\rho_{XE}  \,\|\,\mathds{1}_X\otimes\rho_E)}\right)^{-1},
    \label{eq:Priv3}
\end{align}
where we apply Lemma \ref{lemm:D2toDs} with $\eta = 1-\varepsilon-\delta$, $\lambda_1 = 1-\frac{1}{|\mathcal{Z}|}$, and $\lambda_2 = \frac{1}{|\mathcal{Z}|}(1+c^{-1})$ in the last inequality.

Combining \eqref{eq:Priv1}, \eqref{eq:Priv2}, and \eqref{eq:Priv3} gives
\begin{align}
\varepsilon \geq (\delta + \varepsilon)\left(1-\frac{1}{|\mathcal{Z}|}+ \frac{1+c^{-1}}{|\mathcal{Z}|}\e^{-D_\text{s}^{1-\varepsilon-\delta}(\rho_{XE}  \,\|\,\mathds{1}_X\otimes\rho_E)}\right)^{-1} - c,
\end{align}
which can be translated to
\begin{align}
\log|\mathcal{Z}| 
&\leq -D_\text{s}^{1-\varepsilon-\delta}(\rho_{XE}  \,\|\,\mathds{1}_X\otimes\rho_E) + \log(1+c^{-1}) -\log\left(\frac{\delta -c}{\varepsilon+c}+ \frac{1}{|\mathcal{Z}|}\right)\\
&\leq -D_\text{s}^{1-\varepsilon-\delta}(\rho_{XE}  \,\|\,\mathds{1}_X\otimes\rho_E) + \log(1+c^{-1}) -\log\left(\frac{\delta -c}{\varepsilon+c}\right)\\
&\overset{(a)}{\leq} H_\text{h}^{1-\varepsilon-2\delta}(X{\,|\,}E)_{\rho} + \log(1+c^{-1}) -\log\left(\frac{\delta -c}{\varepsilon+c}\right) -  \log \delta\\
&= H_\text{h}^{1-\varepsilon-2\delta}(X{\,|\,}E)_{\rho} + \log\left(\frac{1+c}{c\delta}\right) + \log\left(\frac{\varepsilon+c}{\delta -c}\right),
\end{align}
where we applied Lemma~\ref{lemm:relation} in (a).
That completes the proof.
\end{proof}

\section{Quantum Soft Covering} \label{sec:covering}
In this section, we consider a classical-quantum state $\rho_{XB} = \sum_{x\in\mathcal{X}} p_X(x) |x\rangle \langle x|\otimes \rho_B^x$ be a classical-quantum state. We assume that $\rho_B$ is invertible and the Hilbert space $\mathcal{H}_B$ is finite dimensional. Let $\mathcal{C}$ be a random codebook where each codeword $x\in \mathcal{X}$ is drawn independently according to distribution $p_X$.
The goal of quantum soft covering is to approximate the state $\rho_B$ using the (random) codebook-induced state $\frac{1}{|\mathcal{C}|}\sum_{x\in\mathcal{C}} \rho_B^x$.
We define the \emph{minimal random codebook size} for an $\varepsilon$-covering as
\begin{align}
    M^\varepsilon(X{\,:\,}B)_\rho := \inf\left\{  M \in \mathds{N}: |\mathcal{C}|\leq M \wedge  
    \frac12 \mathds{E}_{\mathcal{C}\sim p_X^{\otimes M} } \left\| \frac{1}{|\mathcal{C}|}\sum_{x\in\mathcal{C}} \rho_B^x - \rho_B \right\|_1 \leq \varepsilon \right\}.
\end{align}
The main result of this section is to prove the following one-shot characterization of the operational quantity $M^\varepsilon(X{\,:\,}B)_\rho$.

\begin{theo}[One-shot characterization for quantum soft covering]\label{theo:one-shot_second-order_covering}
	Given a classical-quantum state $\rho_{XB}$
	, for every $0<\varepsilon<1$ and $0< c < \delta< \frac{\varepsilon}{3}\wedge\frac{(1-\varepsilon)}{2}$, we have 
	\begin{align} \label{eq:one-shot_covering0}
	I_\textnormal{h}^{1-\varepsilon - 2\delta}\left(X{\,:\,}B\right)_{\rho} - \log \frac{1+c}{c\delta} - \log \frac{\varepsilon + c}{\delta-c} \leq
    \log M^\varepsilon(X{\,:\,}B)_\rho 
    &\leq I_\textnormal{h}^{1-\varepsilon + 3\delta} \left(X{\,:\,}B\right)_{\rho}  + \log \frac{\nu^2}{\delta^4}.
\end{align}
    Here, $I_\textnormal{h}^{\varepsilon}$ is the  $\varepsilon$-hypothesis testing information defined in \eqref{eq:one-shot_entropy} and
    $\nu = |\textnormal{\texttt{spec}}(\rho_B)|$.
\end{theo}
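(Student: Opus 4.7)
\textbf{Proof proposal for Theorem~\ref{theo:one-shot_second-order_covering}.}
The strategy is to mirror the two halves of Theorem~\ref{theo:one-shot_second-order_PA}, exchanging the random strongly $2$-universal hash $h$ for the i.i.d.\ random codebook $\mathcal{C}=\{X_1,\dots,X_M\}$, the target $\mathcal{U}(\rho_{XE})$ for $\rho_B$, and the reference $\mathds{1}_X\otimes\rho_E$ for $\rho_X\otimes\rho_B$. Under this dictionary, the PA direct proof translates into the upper bound on $\log M^\varepsilon$, and the PA converse into the lower bound.

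For the achievability (upper) bound, I first establish a one-shot covering lemma
\[
\tfrac{1}{2}\mathds{E}_{\mathcal{C}}\!\left\|\tfrac{1}{M}\sum_{x\in\mathcal{C}}\rho_B^x-\rho_B\right\|_{1}\leq \tr\!\left[\rho_{XB}\{\mathscr{P}_{\rho_B}[\rho_{XB}]>c\,\rho_X\!\otimes\!\rho_B\}\right]+\sqrt{\tfrac{c\,|\textnormal{\texttt{spec}}(\rho_B)|}{M}}.
\]
Set $\Pi_x:=\{\mathscr{P}_{\rho_B}[\rho_B^x]\leq c\rho_B\}$ and decompose $\rho_B^{X_i}=\rho_B^{X_i}\Pi^{\mathrm{c}}_{X_i}+\rho_B^{X_i}\Pi_{X_i}$ as in the PA direct proof; the triangle inequality then separates a tail term (bounded, after a further $\Pi_X^{\mathrm{c}}(\cdot)\Pi_X^{\mathrm{c}}$ vs.\ $\Pi_X(\cdot)\Pi_X^{\mathrm{c}}$ split, by $\tr[\rho_{XB}\{\mathscr{P}_{\rho_B}[\rho_{XB}]>c\rho_X\!\otimes\!\rho_B\}]$) from a bulk term. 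The bulk term is controlled via $\mathds{E}\|W\|_1\leq\tr\sqrt{\mathds{E}[W^{\dagger}W]}$; i.i.d.\ independence reduces the matrix variance to $\tfrac{1}{M}\mathds{E}_X[\Pi_X(\rho_B^X)^2\Pi_X]$, and two applications of Jensen's inequality (Lemma~\ref{lemm:Jensen}) with $\mathscr{P}_{\rho_B}$ and $\tr[\rho_B\,\cdot\,]$, together with the pinching inequality $\Pi_x\rho_B^{-1}\Pi_x\leq c\,|\textnormal{\texttt{spec}}(\rho_B)|(\rho_B^x)^{-1}$ (Lemma~\ref{lemm:pinching}), deliver $\sqrt{c\,|\textnormal{\texttt{spec}}(\rho_B)|/M}$. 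Choosing $c=\exp\{D_\mathrm{s}^{1-\varepsilon+\delta}(\mathscr{P}_{\rho_B}[\rho_{XB}]\,\|\,\rho_X\!\otimes\!\rho_B)+\xi\}$ makes the tail $\leq\varepsilon-\delta$ (since the projection commutes with $\rho_B$, $\tr[\mathscr{P}_{\rho_B}[\rho_{XB}](\cdot)]=\tr[\rho_{XB}(\cdot)]$) and $M=\lceil c\,|\textnormal{\texttt{spec}}(\rho_B)|/\delta^2\rceil$ makes the bulk $\leq\delta$; Lemma~\ref{lemm:relation} then converts $D_\mathrm{s}^{1-\varepsilon+\delta}$ into $I_\mathrm{h}^{1-\varepsilon+3\delta}$. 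Non-invertibility of the $\rho_B^x$'s is handled by the same perturbation argument as in the PA direct proof.

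For the converse (lower) bound, I take the codebook-dependent test operator $\Pi_\mathcal{C}:=\frac{\sigma_\mathcal{C}}{\sigma_\mathcal{C}+c^{-1}\rho_B}$ with $\sigma_\mathcal{C}:=\tfrac{1}{M}\sum_{x\in\mathcal{C}}\rho_B^x$. Lemma~\ref{lemm:1norm} combined with Lemma~\ref{lemm:quiotient} (parts \ref{item:c}--\ref{item:e}) gives $\varepsilon\geq\mathds{E}_{\mathcal{C}}\exp D_2^*(\sigma_\mathcal{C}\,\|\,\sigma_\mathcal{C}+c^{-1}\rho_B)-c$. Using $\sigma_\mathcal{C}\geq\tfrac{1}{M}\rho_B^{X_i}$ per codeword together with Lemma~\ref{lemm:quiotient}(\ref{item:a},\ref{item:b},\ref{item:e}) produces the per-codeword lower bound $\tfrac{1}{M^2}\sum_i\exp D_2^*(\rho_B^{X_i}\,\|\,\sigma_\mathcal{C}+c^{-1}\rho_B)$. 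Applying joint convexity of $\exp D_2^*$ (Lemma~\ref{lemm:joint_convexity}) to the remaining codewords $X_{j\neq i}$ conditioned on $X_i$ reduces the second argument to $\tfrac{1}{M}\rho_B^{X_i}+(\tfrac{M-1}{M}+c^{-1})\rho_B$; the classical-quantum identity $\mathds{E}_X\exp D_2^*(\rho_B^X\,\|\,\sigma_B^X)=\exp D_2^*(\rho_{XB}\,\|\,\sigma_{XB})$ (valid for c-q $\sigma_{XB}$) then assembles the bound into $\tfrac{1}{M}\exp D_2^*(\rho_{XB}\,\|\,\tfrac{1}{M}\rho_{XB}+(\tfrac{M-1}{M}+c^{-1})\rho_X\!\otimes\!\rho_B)$. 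Invoking Lemma~\ref{lemm:D2toDs} with $\lambda_1=1/M$, $\lambda_2\leq1+c^{-1}$, and $\eta=1-\varepsilon-\delta$, then rearranging, yields $\log M\geq D_\mathrm{s}^{1-\varepsilon-\delta}(\rho_{XB}\,\|\,\rho_X\!\otimes\!\rho_B)+\log\tfrac{c(\delta-c)}{(1+c)(\varepsilon+c)}$; a final application of Lemma~\ref{lemm:relation} converts $D_\mathrm{s}^{1-\varepsilon-\delta}$ to $I_\mathrm{h}^{1-\varepsilon-2\delta}$ at an additive cost $+\log\delta$, delivering the stated lower bound.

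I expect the converse to be the main obstacle. Unlike PA, the covering target $\rho_B$ carries no auxiliary classical label that would naturally thread the classical system $X$ through a collision-divergence estimate. The resolution sketched above rests crucially on three ingredients working in concert: the per-codeword dominance $\sigma_\mathcal{C}\geq\tfrac{1}{M}\rho_B^{X_i}$ singles out a single codeword in the first argument; joint convexity is then applied \emph{only} to the remaining $M-1$ codewords, so that the surviving $\tfrac{1}{M}\rho_B^{X_i}$ term prevents the first argument from collapsing; and the classical-quantum identity for $\exp D_2^*$ finally promotes the expectation to the full state $\rho_{XB}$. Orchestrating these so that Lemma~\ref{lemm:D2toDs} is invoked with $\lambda_1=1/M$ and $\lambda_2=O(1)$ -- which is what forces the correct coefficient of $1$ in front of $I_\mathrm{h}^{1-\varepsilon-2\delta}(X{\,:\,}B)_\rho$ -- is the delicate step.
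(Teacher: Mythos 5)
Your proposal is correct, and the achievability half is essentially identical to the paper's: the same projector $\Pi_x=\{\mathscr{P}_{\rho_B}[\rho_B^x]\leq c\rho_B\}$, the same three-way triangle-inequality split, the same pair of Jensen applications (with $\mathscr{P}_{\rho_B}$ and with $\Tr[\rho_B\,\cdot\,]$) plus the pinching inequality, the same choice of $c$ and $M$, and the same perturbation argument for non-invertible $\rho_B^x$.

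The converse differs from the paper in one organizational step. The paper packs the whole codebook into a single classical-quantum object $\rho_{XB}^{\mathcal{C}}=\sum_{x\in\mathcal{C}}\frac{1}{M}\ket{x}\bra{x}\otimes\rho_x$, writes the bound as a single $\exp D_2^*$ evaluated on that object, and then applies joint convexity (Lemma~\ref{lemm:joint_convexity}) once over $\mathds{E}_{\mathcal{C}}$, computing $\mathds{E}_{\mathcal{C}}[\rho_{XB}^{\mathcal{C}}]$, $\mathds{E}_{\mathcal{C}}[\rho_X^{\mathcal{C}}\otimes\rho_B^{\mathcal{C}}]$, and $\mathds{E}_{\mathcal{C}}[\rho_X^{\mathcal{C}}\otimes\rho_B]$ directly. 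You instead keep the bound per-codeword, apply joint convexity only over the $M-1$ codewords $X_{j\neq i}$ with $X_i$ held fixed (which, since the first argument is constant, is just convexity in the second argument), and then invoke the classical-quantum direct-sum identity $\mathds{E}_X\exp D_2^*(\rho_B^X\,\|\,\sigma_B^X)=\exp D_2^*(\rho_{XB}\,\|\,\sigma_{XB})$ to collect the final expectation over $X_i$. Both routes land on the identical expression $\frac{1}{M}\exp D_2^*\bigl(\rho_{XB}\,\|\,M^{-1}\rho_{XB}+(1+c^{-1}-M^{-1})\rho_X\otimes\rho_B\bigr)$ and then apply Lemma~\ref{lemm:D2toDs} with the same $(\eta,\lambda_1,\lambda_2)$ and Lemma~\ref{lemm:relation}. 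Your variant is somewhat more elementary (it sidesteps constructing a codebook-register state, at the small cost of invoking the c-q additivity of $\exp D_2^*$, which the paper uses implicitly when forming $\rho_{XB}^{\mathcal{C}}$ anyway), and it arguably makes the roles of the $\lambda_1=1/M$ and $\lambda_2\approx 1+c^{-1}$ coefficients more transparent. No gaps; the argument is sound.
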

\begin{remark}
    In Theorem~\ref{theo:one-shot_second-order_covering}, we express the operational quantity $\log M^\varepsilon(X{\,:\,}B)_\rho$ in terms of the $(1-\varepsilon)$-hypothesis testing information.
    However, the lower bound can be improved to $D_\textnormal{s}^{1-\varepsilon - \delta}(\rho_{XB} \,\|\,\rho_X\otimes \rho_B)$ and the upper bound can be improved to ${D_\textnormal{s}^{1-\varepsilon + \delta}\left(\mathscr{P}_{\rho_B}\left[\rho_{XB}\right] \,\|\,\rho_X\otimes \rho_B\right)}$
    (both with additional additive logarithmic terms).
\end{remark}

In the scenario where the underlying state is identical and independently prepared, i.e.~$\rho_{XB}^{\otimes n}$, the established one-shot characterization, Theorem~\ref{theo:one-shot_second-order_covering}, gives the following second-order asymptotics of the logarithmic random codebook size, $\log M^\varepsilon(X^n{\,:\,}B^n)_{\rho^{\otimes n}}$, as a function of blocklength $n$, in which the optimal second-order rate is obtained.

\begin{prop}[Second-order rate for quantum soft covering] \label{prop:second-order_covering}
For every classical-quantum state $\rho_{XB}$ and $0<\varepsilon<1$, we have
\begin{align}
    \log M^\varepsilon(X^n{\,:\,}B^n)_{\rho^{\otimes n}} = n I(X{\,:\,}B)_\rho - \sqrt{n V(X{\,:\,}B) } \Phi^{-1}(\varepsilon) + O\left(\log n\right).
\end{align}
\end{prop}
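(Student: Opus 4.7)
The plan is to follow exactly the same template as the proof of Proposition~\ref{prop:second-order_PA}: use the one-shot characterization in Theorem~\ref{theo:one-shot_second-order_covering} on the i.i.d.\ state $\rho_{XB}^{\otimes n}$, tune the free parameters $\delta$ and $c$ so that all residual terms are $O(\log n)$, and then invoke the second-order expansion of the hypothesis testing divergence from Lemma~\ref{lemm:second}. Concretely, I would set $\delta = n^{-1/2}$ and $c = \tfrac{1}{2}\delta$, which ensures $0<c<\delta$ and that both $\log\frac{1+c}{c\delta}$ and $\log\frac{\varepsilon+c}{\delta-c}$ are $O(\log n)$.

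For the upper bound, the residual term $\log\frac{\nu^2}{\delta^4}$ with $\nu = |\textnormal{\texttt{spec}}(\rho_B^{\otimes n})|$ is controlled by the pinching bound $|\textnormal{\texttt{spec}}(\rho_B^{\otimes n})| \leq (n+1)^{\dim \mathcal{H}_B - 1}$ from Lemma~\ref{lemm:pinching}, so it contributes $O(\log n)$. Using the definition $I_\textnormal{h}^{1-\varepsilon\pm\delta}(X^n{\,:\,}B^n)_{\rho^{\otimes n}} = D_\textnormal{h}^{1-\varepsilon\pm\delta}(\rho_{XB}^{\otimes n}\,\|\,\rho_X^{\otimes n}\otimes\rho_B^{\otimes n})$ and Lemma~\ref{lemm:second} with error probability $1-\varepsilon$ in place of $\varepsilon$, I get
\begin{align}
I_\textnormal{h}^{1-\varepsilon\pm \delta}(X^n{\,:\,}B^n)_{\rho^{\otimes n}}
= n\,I(X{\,:\,}B)_\rho + \sqrt{n\,V(X{\,:\,}B)_\rho}\,\Phi^{-1}(1-\varepsilon) + O(\log n).
\end{align}
The identity $\Phi^{-1}(1-\varepsilon) = -\Phi^{-1}(\varepsilon)$ then flips the sign and yields the claimed formula.

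The lower bound is handled symmetrically: with the same choice of $\delta$ and $c$, the two subtractive terms $\log\frac{1+c}{c\delta}$ and $\log\frac{\varepsilon+c}{\delta-c}$ in the left-hand side of \eqref{eq:one-shot_covering0} are $O(\log n)$, and the same expansion of $I_\textnormal{h}^{1-\varepsilon-2\delta}$ applies since shifting $\varepsilon$ by $O(n^{-1/2})$ perturbs $\Phi^{-1}$ only by $O(n^{-1/2})$ (so the extra contribution $\sqrt{n\,V}\cdot O(n^{-1/2}) = O(1)$ is absorbed into the $O(\log n)$ remainder). Combining both directions yields matching upper and lower bounds up to $O(\log n)$ terms, which is exactly the conclusion of Proposition~\ref{prop:second-order_covering}.

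There is no real obstacle here; the proof is a straightforward specialization of the one-shot bounds to the i.i.d.\ regime, completely parallel to the proof of Proposition~\ref{prop:second-order_PA}. The only care required is in checking that the shift in $\varepsilon$ by $\pm 2\delta$ or $\pm 3\delta$ inside the hypothesis testing quantity does not spoil the second-order term, which is handled by continuity of $\Phi^{-1}$ on any compact subinterval of $(0,1)$ together with $\sqrt{n}\cdot O(\delta) = O(1)$ for $\delta = n^{-1/2}$.
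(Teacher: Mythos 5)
Your proposal is correct and follows the same approach as the paper: apply Theorem~\ref{theo:one-shot_second-order_covering} with $\delta = n^{-1/2}$, $c = \tfrac12\delta$, bound $|\textnormal{\texttt{spec}}(\rho_B^{\otimes n})|$ polynomially via Lemma~\ref{lemm:pinching}, and invoke the second-order expansion of Lemma~\ref{lemm:second} at $1-\varepsilon$ to pick up the sign flip $\Phi^{-1}(1-\varepsilon) = -\Phi^{-1}(\varepsilon)$. The only slight redundancy is your closing continuity remark, since Lemma~\ref{lemm:second} already accommodates shifts of the error parameter by $O(n^{-1/2})$ directly.
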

\begin{proof}
Since $|\textnormal{\texttt{spec}}(\rho_B^{\otimes n})|\leq (n+1)^{|\mathcal{H}_B|-1}$ for $|\mathcal{H}_B|$ being the rank of $\rho_B$, the additive terms grow of order $O(\log n)$.
Then applying Theorem~\ref{theo:one-shot_second-order_covering}, with $\delta_n = n^{-\sfrac{1}{2}}$ and $c_n = \frac12 n^{-\sfrac{1}{2}}$, together with the second-order expansion of the hypothesis testing information, Lemma~\ref{lemm:second}, proves our claim.
\end{proof}

{ 
Moreover, the one-shot characterization, can be extended to the moderate deviation regime \cite{CTT2017, CH17}; namely, we derive the optimal rates of the minimal required random codebook size when the error approaches $0$ or $1$ moderately.

\begin{prop}[Moderate deviations for quantum soft covering] \label{prop:moderate_covering}
For every classical-quantum state $\rho_{XB}$ and every moderate sequence $(a_n)_{n\in\mathds{N}}$ satisfying \eqref{eq:an} and $\varepsilon_n :=  \mathrm{e}^{-na_n^2}$, we have 
\begin{align} 
\begin{dcases}
    \frac1n \log M^{\varepsilon_n} (X^n{\,:\,}B^n)_{\rho^{\otimes n}} = I(X{\,:\,}B)_\rho + \sqrt{2 V(X{\,:\,}B) } a_n + o\left(a_n\right); \\
    \frac1n \log M^{1-\varepsilon_n} (X^n{\,:\,}B^n)_{\rho^{\otimes n}} = I(X{\,:\,}B)_\rho - \sqrt{2 V(X{\,:\,}B) } a_n + o\left(a_n\right). \\
\end{dcases}
\end{align}
\end{prop}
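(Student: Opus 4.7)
The plan is to mirror the structure of the proof of Proposition~\ref{prop:moderate_PA}: invoke the one-shot characterization Theorem~\ref{theo:one-shot_second-order_covering} for the product state $\rho_{XB}^{\otimes n}$, select a vanishing pair of auxiliary parameters $\delta_n$, $c_n$, and push both the upper and lower one-shot bounds to their common moderate-deviation limit via Lemma~\ref{lemm:moderate}.

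Concretely, I would set $\delta_n := \frac{1}{4}\varepsilon_n = \frac{1}{4}\mathrm{e}^{-n a_n^2}$ and $c_n := \frac{1}{2}\delta_n$. These choices satisfy the admissibility constraint $0 < c_n < \delta_n < \frac{\varepsilon}{3}\wedge\frac{1-\varepsilon}{2}$ for all sufficiently large $n$, in both the $\varepsilon = \varepsilon_n$ and $\varepsilon = 1-\varepsilon_n$ regimes (for the latter one uses $\frac{1}{4}\varepsilon_n < \frac{1}{2}\varepsilon_n = \frac{1-\varepsilon}{2}$). The key observation is that each of the shifted tails that appears as a superscript of $I_\textnormal{h}$ is again of the form $\mathrm{e}^{-n b_n^2}$ for some moderate sequence $(b_n)_{n\in\mathds{N}}$ with $b_n = a_n + o(a_n)$: for instance $\frac{1}{2}\varepsilon_n = \mathrm{e}^{-n(a_n^2 + (\log 2)/n)}$ and $\frac{3}{2}\varepsilon_n = \mathrm{e}^{-n(a_n^2 - \log(3/2)/n)}$, and since $n a_n^2 \to \infty$ the $O(1/n)$ correction in the exponent is absorbed into $o(a_n)$ after taking a square root.

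Next, I would check that every additive overhead in \eqref{eq:one-shot_covering0} contributes only $o(n a_n)$. From $\nu = |\textnormal{\texttt{spec}}(\rho_B^{\otimes n})| \leq (n+1)^{|\mathcal{H}_B|-1}$ one gets $\frac{1}{n}\log\nu^2 = O((\log n)/n) = o(a_n)$; from $\delta_n = \frac{1}{4}\mathrm{e}^{-n a_n^2}$ one gets $\frac{1}{n}\log\delta_n^{-4} = 4 a_n^2 + O(1/n) = o(a_n)$ because $a_n \to 0$; and a direct calculation with $c_n = \delta_n/2$ shows that $\log\frac{1+c_n}{c_n\delta_n}$ and $\log\frac{\varepsilon + c_n}{\delta_n - c_n}$ are at most $O(n a_n^2) = o(n a_n)$ in either regime.

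Finally, for the first identity the superscripts $1 - \frac{3}{2}\varepsilon_n$ and $1 - \frac{1}{4}\varepsilon_n$ invoke the first line of Lemma~\ref{lemm:moderate}, which expands to $I(X{\,:\,}B)_\rho + \sqrt{2 V(X{\,:\,}B)_\rho}\, b_n + o(b_n) = I(X{\,:\,}B)_\rho + \sqrt{2 V(X{\,:\,}B)_\rho}\, a_n + o(a_n)$, so the upper and lower one-shot bounds squeeze to the same value. For the second identity the superscripts $\frac{7}{4}\varepsilon_n$ and $\frac{1}{2}\varepsilon_n$ instead invoke the second line of Lemma~\ref{lemm:moderate}, yielding the $-\sqrt{2V(X{\,:\,}B)_\rho}\,a_n$ expansion by the same squeeze. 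I do not anticipate a serious obstacle; the entire proof reduces to asymptotic bookkeeping, and the only point worth checking is the above admissibility constraint in the large-error regime $\varepsilon = 1-\varepsilon_n$.
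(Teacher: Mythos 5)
Your proposal is correct and mirrors the paper's proof almost exactly: both pick $\delta_n = \tfrac14\mathrm{e}^{-na_n^2}$ and $c_n=\tfrac12\delta_n$, observe that the shifted tails $\varepsilon\mp(2\text{ or }3)\delta_n$ are again of the form $\mathrm{e}^{-nb_n^2}$ with $b_n=a_n+o(a_n)$, and check that the additive overheads are $o(na_n)$ before invoking Theorem~\ref{theo:one-shot_second-order_covering} together with Lemma~\ref{lemm:moderate}. Your explicit verification of the admissibility constraint $c_n<\delta_n<\tfrac{\varepsilon}{3}\wedge\tfrac{1-\varepsilon}{2}$ in the $\varepsilon=1-\varepsilon_n$ regime is a welcome added detail that the paper leaves implicit.
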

\begin{proof}
    We prove the first assertion.
    For every moderate deviation sequence $(a_n)_{n\in\mathds{N}}$, we let $\delta_n = \frac14\mathrm{e}^{-na_n^2}$ satisfying \eqref{eq:an} and let $c_n = \frac12 \delta_n$.
    Then, $\varepsilon_n- 2 \delta_n$ or $\varepsilon + 3 \delta_n$ can be viewed as  $\mathrm{e}^{-n b_n^2}$ for another moderate deviation sequence $(b_n)_{n\in\mathds{N}}$; we have $b_n = a_n + (b_n-a_n) = o(a_n)$ (see e.g.~\cite[\S 7.2]{CHD+21}),
    $\frac{1}{n} \log \delta_n = -a_n^2 = o(a_n)$ and $\frac{1}{n} \log |\textnormal{\texttt{spec}}(\rho_B^{\otimes n})| =  O\left(\frac{\log n}{n}\right) = o(a_n)$. 
Then, applying Theorem~\ref{theo:one-shot_second-order_covering} together with Lemma~\ref{lemm:moderate} leads to our first claim of the moderate derivation for quantum soft covering.
The second line follows similarly.
\end{proof}
}

\begin{remark}
We remark that the upper bound may be viewed as a quantum generalization of a classical result by Hayashi \cite[Lemma 2]{Hay06_resolvability}.
However, such a generalization is non-trivial due to difficulties of non-commutativity.
\end{remark}

The proofs of the one-shot achievability (i.e.~upper bound) and converse (i.e.~lower bound) of Theorem~\ref{theo:one-shot_second-order_covering} are presented in Section~\ref{sec:achievability_covering} and \ref{sec:converse_covering}, respectively.

\subsection{Direct Bound} \label{sec:achievability_covering}
	We prove the upper bound on $\log M^\varepsilon(X:B)_\rho $ here.
 \begin{proof}[Proof of achievability of {Theorem}~\ref{theo:one-shot_second-order_covering}]	
	Throughout the proof, we use the short notation: $\rho_x \equiv \rho_B^x$ and $M \equiv |\mathcal{C}|$. 
	For every $x\in\mathcal{C}$, we define a projection $\Pi_x := \left\{ \mathscr{P}_{\rho_B}\left[\rho_x\right] \leq c \rho_B \right\}$ and its complement $\Pi_x^{\mathrm{c}}:= \mathds{1}_B - \Pi_x $.
	
	We claim that for any random codebook $\mathcal{C}$ with its codeword independently drawn according to distribution $p_X$ and for any $c>0$, we have
	\begin{align}\label{eq:one-shot_covering}
		\frac12 \mathds{E}_{\mathcal{C}} \left\| \frac{1}{|\mathcal{C}|}\sum_{x\in\mathcal{C}} \rho_B^x - \rho_B \right\|_1 \leq \Tr\left[\rho_{XB}\left\{ \mathscr{P}_{\rho_B}[\rho_{XB}] > c \rho_X\otimes \rho_B \right\} \right] + \sqrt{ \frac{|\textnormal{\texttt{spec}}(\rho_B)|c}{|\mathcal{C}|} }.
	\end{align}
	Then, let $\delta \in (0,\varepsilon)$ and choose
\begin{align}
c = \exp\left\{D_\text{s}^{1-\varepsilon + \delta} \left( \mathscr{P}_{\rho_B}[\rho_{XB}]  \,\|\, \rho_X\otimes \rho_B \right) + \xi \right\}
\end{align}
for some small $\xi>0$.
By definition of the $\varepsilon$-information spectrum divergence \eqref{eq:Ds},
\begin{align} \label{eq:upper_1}
\Tr\left[\rho_{XB}\left\{ \mathscr{P}_{\rho_B}[\rho_{XB}] > c \rho_X\otimes \rho_B \right\} \right] =
\Tr\left[\mathscr{P}_{\rho_B}[\rho_{XB}]\left\{ \mathscr{P}_{\rho_B}[\rho_{XB}] > c \rho_X\otimes \rho_B \right\} \right]
< \varepsilon - \delta.
\end{align}
Letting
\begin{align}
|\mathcal{C}| = \ceil[\Big]{ {|\texttt{spec}(\rho_B)|c}{\delta^{-2}} },
\end{align}
we obtain the $\varepsilon$-covering, i.e.~
\begin{align}
    \frac12 \mathds{E}_{\mathcal{C} } \left\| \frac{1}{|\mathcal{C}|}\sum_{x\in\mathcal{C}} \rho_B^x - \rho_B \right\|_1 \leq \varepsilon,
\end{align}
and then together \eqref{eq:one-shot_covering} we have the following upper bound on $\log M^\varepsilon(X:B)_\rho $:
\begin{align}
    \log M^\varepsilon(X{\,:\,}B)_\rho &\leq D_\text{s}^{1-\varepsilon + \delta} \left( \mathscr{P}_{\rho_B}[\rho_{XB}]  \,\|\, \rho_X\otimes \rho_B \right) + \xi + \log |\texttt{spec}(\rho_B)| - 2 \log \delta \\
    &\leq I_\text{h}^{1-\varepsilon + 3\delta} \left(X{\,:\,}B\right)_{\rho} + \xi + 2 \log |\texttt{spec}(\rho_B)| - 4 \log \delta,
\end{align}
where we have used Lemma~\ref{lemm:relation} in the last inequality.
Since $\xi>0$ is arbitrary, we take $\xi\to 0$ to obtain the upper bound in \eqref{eq:one-shot_covering0}.
	
	Now, we move on to prove \eqref{eq:one-shot_covering}.
	We first prove the case where all $\{\rho_x\}_{x}$ are invertible.
	Use triangle inequality of the norm $\|\cdot \|_1$, we obtain
	\begin{align}
		\frac12 \mathds{E}_{\mathcal{C}} \left\| \frac{1}{|\mathcal{C}|}\sum_{x\in\mathcal{C}} \rho_x - \rho_B \right\|_1 &=
		\frac12 \mathds{E}_{\mathcal{C}} \left\| \frac{1}{|\mathcal{C}|}\sum_{x\in\mathcal{C}} \rho_x(\Pi_x^\mathrm{c} + \Pi_x) - \mathds{E}_\mathcal{C}\left[ \frac{1}{|\mathcal{C}|}\sum_{x\in\mathcal{C}} \rho_x(\Pi_x^\mathrm{c} + \Pi_x)\right] \right\|_1 \\
		&\leq 
		\frac12 \mathds{E}_{\mathcal{C}} \left\| \frac{1}{|\mathcal{C}|}\sum_{x\in\mathcal{C}} \rho_x \Pi_x^\mathrm{c} - \mathds{E}_\mathcal{C}\left[ \frac{1}{|\mathcal{C}|}\sum_{x\in\mathcal{C}} \rho_x \Pi_x^\mathrm{c} \right] \right\|_1 \notag \\
		&\quad + \frac12 \mathds{E}_{\mathcal{C}} \left\| \frac{1}{|\mathcal{C}|}\sum_{x\in\mathcal{C}} \rho_x \Pi_x - \mathds{E}_\mathcal{C}\left[ \frac{1}{|\mathcal{C}|}\sum_{x\in\mathcal{C}} \rho_x \Pi_x\right] \right\|_1 \\
		\begin{split} \label{eq:one-shot_second-order1}
		&\leq \frac12 \mathds{E}_{\mathcal{C}} \left\| \frac{1}{|\mathcal{C}|}\sum_{x\in\mathcal{C}} \Pi_x^\mathrm{c}\rho_x \Pi_x^\mathrm{c} - \mathds{E}_\mathcal{C}\left[ \frac{1}{|\mathcal{C}|}\sum_{x\in\mathcal{C}} \Pi_x^\mathrm{c}\rho_x \Pi_x^\mathrm{c} \right] \right\|_1 \\
		&\quad + \frac12 \mathds{E}_{\mathcal{C}} \left\| \frac{1}{|\mathcal{C}|}\sum_{x\in\mathcal{C}} \Pi_x \rho_x \Pi_x^\mathrm{c} - \mathds{E}_\mathcal{C}\left[ \frac{1}{|\mathcal{C}|}\sum_{x\in\mathcal{C}} \Pi_x \rho_x \Pi_x^\mathrm{c} \right] \right\|_1  \\
		&\quad + \frac12 \mathds{E}_{\mathcal{C}} \left\| \frac{1}{|\mathcal{C}|}\sum_{x\in\mathcal{C}} \rho_x \Pi_x - \mathds{E}_\mathcal{C}\left[ \frac{1}{|\mathcal{C}|}\sum_{x\in\mathcal{C}} \rho_x \Pi_x\right] \right\|_1.
		\end{split}
	\end{align}
	The first term in \eqref{eq:one-shot_second-order1} can be further bounded using triangle inequality again as:
	\begin{align}
		\frac12 \mathds{E}_{\mathcal{C}} \left\| \frac{1}{|\mathcal{C}|}\sum_{x\in\mathcal{C}} \Pi_x^\mathrm{c}\rho_x \Pi_x^\mathrm{c} - \mathds{E}_\mathcal{C}\left[ \frac{1}{|\mathcal{C}|}\sum_{x\in\mathcal{C}} \Pi_x^\mathrm{c}\rho_x \Pi_x^\mathrm{c} \right] \right\|_1
		&\leq \frac12 \mathds{E}_{\mathcal{C}} \left\| \frac{1}{|\mathcal{C}|}\sum_{x\in\mathcal{C}} \Pi_x^\mathrm{c}\rho_x \Pi_x^\mathrm{c} \right\|_1 + \frac12 \left\|\mathds{E}_\mathcal{C}\left[ \frac{1}{|\mathcal{C}|}\sum_{x\in\mathcal{C}} \Pi_x^\mathrm{c}\rho_x \Pi_x^\mathrm{c} \right] \right\|_1 \\
		&\leq \mathds{E}_{\mathcal{C}} \left\| \frac{1}{|\mathcal{C}|}\sum_{x\in\mathcal{C}} \Pi_x^\mathrm{c}\rho_x \Pi_x^\mathrm{c} \right\|_1 \\
		&= \mathds{E}_{x\sim p_X} \Tr\left[ \rho_x \left\{\mathscr{P}_{\rho_B}\left[\rho_x\right] > c\rho_B \right\} \right] \\
		&= \Tr\left[\rho_{XB}\left\{\mathscr{P}_{\rho_B}\left[\rho_{XB}\right]> c \rho_X\otimes \rho_B \right\} \right]. \label{eq:one-shot_second-order2}
	\end{align}

	Next, we bound the second term in \eqref{eq:one-shot_second-order1}:
	\begin{align}
		&\frac12 \mathds{E}_{\mathcal{C}} \left\| \frac{1}{|\mathcal{C}|}\sum_{x\in\mathcal{C}} \Pi_x \rho_x \Pi_x^\mathrm{c} - \mathds{E}_\mathcal{C}\left[ \frac{1}{|\mathcal{C}|}\sum_{x\in\mathcal{C}} \rho_x \Pi_x^\mathrm{c} \right] \right\|_1 
		= \frac12 \mathds{E}_{\mathcal{C}} \left\| \frac{1}{|\mathcal{C}|}\sum_{x\in\mathcal{C}} \Pi_x^\mathrm{c} \rho_x \Pi_x - \mathds{E}_\mathcal{C}\left[ \frac{1}{|\mathcal{C}|}\sum_{x\in\mathcal{C}} \Pi_x^\mathrm{c} \rho_x \Pi_x \right] \right\|_1 \\
		&= \frac12 \mathds{E}_{\mathcal{C}} \Tr \sqrt{ \left(\frac{1}{|\mathcal{C}|}\sum_{x\in\mathcal{C}} \Pi_x^\mathrm{c} \rho_x \Pi_x - \mathds{E}_\mathcal{C}\left[ \frac{1}{|\mathcal{C}|}\sum_{x\in\mathcal{C}} \Pi_x^\mathrm{c}\rho_x\Pi_x  \right]\right)^\dagger \left( \frac{1}{|\mathcal{C}|}\sum_{x\in\mathcal{C}} \Pi_x^\mathrm{c} \rho_x \Pi_x - \mathds{E}_\mathcal{C}\left[ \frac{1}{|\mathcal{C}|}\sum_{x\in\mathcal{C}} \Pi_x^\mathrm{c} \rho_x \Pi_x \right] \right) } \\
		&\overset{(a)}{\leq} \frac12  \Tr \sqrt{ \mathds{E}_{\mathcal{C}} \left(\frac{1}{|\mathcal{C}|}\sum_{x\in\mathcal{C}} \Pi_x^\mathrm{c} \rho_x \Pi_x - \mathds{E}_\mathcal{C}\left[ \frac{1}{|\mathcal{C}|}\sum_{x\in\mathcal{C}} \Pi_x^\mathrm{c}\rho_x\Pi_x  \right]\right)^\dagger \left( \frac{1}{|\mathcal{C}|}\sum_{x\in\mathcal{C}} \Pi_x^\mathrm{c} \rho_x \Pi_x - \mathds{E}_\mathcal{C}\left[ \frac{1}{|\mathcal{C}|}\sum_{x\in\mathcal{C}} \Pi_x^\mathrm{c} \rho_x \Pi_x \right] \right) } \\
		&\overset{(b)}{=} \frac12  \Tr \sqrt{ \mathrm{Var}_{\mathcal{C}} \left[ \frac{1}{|\mathcal{C}|}\sum_{x\in\mathcal{C}} \Pi_x^\mathrm{c} \rho_x \Pi_x \right] } \\
		&\overset{(c)}{=} \frac12  \Tr \sqrt{ \frac{1}{|\mathcal{C}|} \mathrm{Var}_{x\sim p_X} \left[ \Pi_x^\mathrm{c} \rho_x \Pi_x \right] } \\
		&\overset{(d)}{\leq} \frac12  \Tr \sqrt{ \frac{1}{|\mathcal{C}|} \mathds{E}_{x\sim p_X} \left[ \Pi_x \rho_x \Pi_x^\mathrm{c} \rho_x \Pi_x \right] } \\
		&\overset{(e)}{\leq} \frac12  \Tr \sqrt{ \frac{1}{|\mathcal{C}|} \mathds{E}_{x\sim p_X} \left[ \Pi_x \rho_x^2 \Pi_x \right] }.
	\end{align}
	Here, in (a) we applied Jensen's inequality, Lemma~\ref{lemm:Jensen}, with expectation $\mathds{E}_{\mathcal{C}}$ and the operator concavity of square-root;
	in (b) we denoted a {matrix-valued variance} for a random matrix $H$ as $\mathrm{Var}[H]:= \mathds{E}[H^\dagger H] - (\mathds{E}[H])^\dagger \mathds{E}[H]$ (see e.g.~\cite[\S 2]{Tro15});
	in (c) we applied the mutual independence of the codewords in the random codebook;
	in (d) we used the operator monotone of square-root and $\mathrm{Var}[H] \leq \mathds{E}[H^\dagger H]$;
	and (e) follows from $\Pi_x^\mathrm{c} \leq \mathds{1}_B$ and the operator monotonicity of square-root. 
	
	Applying the same reasoning on the third term of \eqref{eq:one-shot_second-order1}, we thus upper bound the sum of the second and the third term of \eqref{eq:one-shot_second-order1} by
	\begin{align} \label{eq:one-shot_second-order3}
		\Tr \sqrt{ \frac{1}{|\mathcal{C}|} \mathds{E}_{x\sim p_X} \left[ \Pi_x \rho_x^2 \Pi_x \right] }.
	\end{align}

	To further upper bound this term, we use Jensen's inequality, Lemma~\ref{lemm:Jensen}, with the pinching map $\mathscr{P}_{\rho_B}$ and the operator concavity of square-root to have
	\begin{align}
		\Tr \sqrt{ \mathds{E}_{x\sim p_X} \left[ \Pi_x \rho_x^2 \Pi_x \right] } &= \Tr \mathscr{P}_{\rho_B}\sqrt{ \mathds{E}_{x\sim p_X} \left[ \Pi_x \rho_x^2 \Pi_x \right] } \\
		&\leq \Tr\sqrt{ \mathscr{P}_{\rho_B} \left[ \mathds{E}_{x\sim p_X} \left[ \Pi_x \rho_x^2 \Pi_x \right]\right] } \\
		&= \Tr \left[ \rho_B \sqrt{ \mathscr{P}_{\rho_B} \left[ \mathds{E}_{x\sim p_X} \left[ \Pi_x \rho_x^2 \Pi_x \right]\right] \rho_B^{-2} } \right] \\
		&\overset{(a)}{\leq} \sqrt{ \Tr\left[ \mathds{E}_{x\sim p_X} \left[ \Pi_x \rho_x^2 \Pi_x \right]\right] \rho_B^{-1} } \\
		&= \sqrt{ \mathds{E}_{x\sim p_X} \Tr\left[\rho_x^2 \Pi_x \rho_B^{-1} \Pi_x \right]  }, \label{eq:one-shot_second-order4}
	\end{align}
	where (a) follows from Jensen's inequality, Lemma~\ref{lemm:Jensen}, with the functional $\Tr[\rho_B (\, \cdot \,)]$ and the operator concavity of square-root.

	Now, since 
	\begin{align}
		\Pi_x = \left\{ \mathscr{P}_{\rho_B}[\rho_x] \leq c \rho_B \right\} = \left\{ \rho_B^{-1} \leq c \left(\mathscr{P}_{\rho_B}[\rho_x]\right)^{-1} \right\},
	\end{align}
 	we obtain
 	\begin{align}
 		\Pi_x \rho_B^{-1} \Pi_x &\leq c \Pi_x \left(\mathscr{P}_{\rho_B}[\rho_x]\right)^{-1} \Pi_x \\
 		&= c\left(\mathscr{P}_{\rho_B}[\rho_x]\right)^{-1/2}\Pi_x \left(\mathscr{P}_{\rho_B}[\rho_x]\right)^{-1/2},\\
 		\end{align}
 		where we used the fact that $\Pi_x$ commutes with $\mathscr{P}_{\rho_B}[\rho_x]$. Then for each $x$,
 			\begin{align}
 			\Tr\left[\rho_x^2 \Pi_x \rho_B^{-1} \Pi_x \right] \le &c\Tr\left[\rho_x^2 \mathscr{P}_{\rho_B}(\rho_x)^{-1/2} \Pi_x \mathscr{P}_{\rho_B}(\rho_x)^{-1/2} \right] \\ \le& c\Tr\left[\rho_x^2 \mathscr{P}_{\rho_B}(\rho_x)^{-1} \right]
 			\\ \overset{(a)}{\le}& c|\textnormal{\texttt{spec}}(\rho_B)|\Tr\left[\rho_x^2 \rho_x^{-1} \right] \\
 			=&c|\textnormal{\texttt{spec}}(\rho_B)|, \label{eq:one-shot_second-order5}
 			\end{align}
 	where in (a) we used the the pinching inequality (Lemma~\ref{lemm:pinching}), i.e.~
 \[\rho\le |\textnormal{\texttt{spec}}(\rho_B)| \mathscr{P}_{\rho_B}[\rho_x],\] and the operator monotonicity of inversion.
	Combining \eqref{eq:one-shot_second-order1}, \eqref{eq:one-shot_second-order3}, \eqref{eq:one-shot_second-order4}, and \eqref{eq:one-shot_second-order5} arrives at the desired \eqref{eq:one-shot_covering}. 
	
For the general non-invertible state $\rho_x$, we define the approximation
	\[\rho_x^\epsilon := (1-\epsilon)\rho_x+\eps\frac{\mathds{1}_{B}}{|\mathcal{H}_B|},\]
and then
	\[\rho_B^\epsilon := (1-\epsilon)\rho_B+\epsilon\frac{\mathds{1}_{B}}{|\mathcal{H}_B|}\ ,\  \rho_{XB}^\epsilon=(1-\epsilon)\rho^\epsilon_{XB}+\epsilon \rho_X \ten \frac{\mathds{1}_{B}}{|\mathcal{H}_B|}. \]
Moreover, since the pinching map $\mathscr{P}_{\rho_B}$ is unital completely positive and trace-preserving,
		\[\mathscr{P}_{\rho_B}(\rho_{XB}^\epsilon)=(1-\epsilon)\mathscr{P}_{\rho_B}(\rho_{XB})+\epsilon \rho_X\ten \frac{\mathds{1}_{B}}{|\mathcal{H}_B|} \pl.\]
It is clear that
\[ \lim_{\epsilon \to 0}\frac12 \mathds{E}_{\mathcal{C}} \left\| \frac{1}{|\mathcal{C}|}\sum_{x\in\mathcal{C}} \rho_x^\epsilon - \rho_B^\epsilon \right\|_1 =\frac12 \mathds{E}_{\mathcal{C}} \left\| \frac{1}{|\mathcal{C}|}\sum_{x\in\mathcal{C}} \rho_x - \rho_B \right\|_1.\]
For the right-hand side of the desired inequality \eqref{eq:one-shot_covering}, we have 
\[\left\| \mathscr{P}_{\rho_B}(\rho_{XB}^\epsilon)-\mathscr{P}_{\rho_B}(\rho_{XB}) \right\|_1 \le 2\epsilon\]
and for small enough $\epsilon$, the projection 
\[ \{ \mathscr{P}_{\rho_B}[\rho_{XB}^\epsilon] > c \rho_X\otimes \rho_B^\epsilon \}= \{ \mathscr{P}_{\rho_B}[\rho_{XB}] > c \rho_X\otimes \rho_B \}.\]
Indeed, because $\mathscr{P}_{\rho_B}[\rho_{XB}]$ and $\rho_X\otimes \rho_B$ are commutative, they can be viewed as functions on the finite sets of spectrum. Therefore, we have
\[\lim_{\epsilon\to 0}\Tr\left[\rho_{XB}^\epsilon\left\{ \mathscr{P}_{\rho_B}[\rho_{XB}^\epsilon] > c \rho_X\otimes \rho_B^\epsilon \right\} \right]=\Tr\left[\rho_{XB}\left\{ \mathscr{P}_{\rho_B}[\rho_{XB}] > c \rho_X\otimes \rho_B \right\} \right]\ ,\]
which proves \eqref{eq:one-shot_covering} by approximation. 
 \end{proof}

\subsection{Converse Bound} \label{sec:converse_covering}
	We prove the lower bound on on $\log M^\varepsilon(X{\,:\,}B)_\rho $ here.
\begin{proof}[Proof of converse of {Theorem}~\ref{theo:one-shot_second-order_covering}]	
Throughout this proof, we write $\rho_x \equiv \rho_B^x$ and  $M := |\mathcal{C}|$. 
For every $0<c<\delta < \frac{1-\varepsilon}{2}$ and for any realization of the random codebook $\mathcal{C}$, we choose the noncommutative quotient
\begin{align}
    \Pi = \frac{ \frac{1}{M} \sum_{\bar{x}\in\mathcal{C}} \rho_{\bar x}  }{ \frac{1}{M} \sum_{\bar x\in\mathcal{C}} \rho_{\bar x} + c^{-1} \rho_B }.
\end{align}
Lemma~\ref{lemm:1norm} then implies that
\begin{align}
    \frac12 \left\| \frac{1}{M}\sum_{x\in\mathcal{C}} \rho_B^x - \rho_B \right\|_1 \geq \Tr\left[ \frac{1}{M}\sum_{x\in\mathcal{C}} \rho_B^x \Pi \right] - \Tr\left[ \rho_B \Pi \right] \label{eq:converse_covering1}
\end{align}
Using Lemma~\ref{lemm:quiotient}-\ref{item:d}, the second term in \eqref{eq:converse_covering1} can be lower bounded as
\begin{align}
    - \Tr\left[ \rho_B \Pi \right] &= - c \Tr\left[ \frac{1}{M} \sum_{\bar x\in\mathcal{C}} \rho_{\bar x} \cdot \frac{ c^{-1}\rho_B }{\frac{1}{M} \sum_{\bar x\in\mathcal{C}} \rho_{\bar x} + c^{-1} \rho_B } \right] \\
    &\geq -c \Tr\left[ \frac{1}{M} \sum_{\bar x\in\mathcal{C}} \rho_{\bar x} \right] \\
    &= - c \,, \label{eq:converse_covering2}
\end{align}
since $\frac{ c^{-1}\rho_B }{\frac{1}{M} \sum_{\bar x\in\mathcal{C}} \rho_{\bar x} + c^{-1} \rho_B }\leq \mathds{1}_B$ by Lemma~\ref{lemm:quiotient}-\ref{item:c}.

For each $x\in\mathcal{C}$, we use (by recalling Lemma~\ref{lemm:quiotient}-\ref{item:a} \& \ref{item:b})
\begin{align}
    \Pi \geq \frac{ \rho_{x} }{ \sum_{\bar x\in\mathcal{C}} \rho_{\bar x} + c^{-1} M \rho_B }
\end{align}
to lower bound the first term in \eqref{eq:converse_covering1} as
\begin{align}
    \Tr\left[ \frac{1}{M}\sum_{x\in\mathcal{C}} \rho_B^x \Pi \right] &\geq \frac{1}{M}\sum_{x\in\mathcal{C}} \Tr\left[ \rho_x \frac{ \rho_{x} }{ \sum_{\bar x\in\mathcal{C}} \rho_{\bar x} + c^{-1} M \rho_B }\right] \\
    &= \frac{1}{M} \exp D_2^*\left( \rho_{XB}^{\mathcal{C}}  \,\|\, \rho_{X}^{\mathcal{C}}\otimes \rho_{B}^{\mathcal{C}} + c^{-1} \rho_{X}^{\mathcal{C}}\otimes \rho_B \right), \label{eq:converse_covering3}
\end{align}
where we recall the definition of the sandwiched R\'enyi divergence, \eqref{eq:sand}, and
\begin{align}
    \rho_{XB}^{\mathcal{C}} &:= \sum_{x\in\mathcal{C}} \frac{1}{M} |x\rangle \langle x| \otimes \rho_x.
\end{align}
With \eqref{eq:converse_covering1}, \eqref{eq:converse_covering2}, and \eqref{eq:converse_covering3}, we take expectation over the random codebook and use the joint convexity of $\exp D_2^*(\,\cdot\|\,\cdot)$, Lemma~\ref{lemm:joint_convexity}, to obtain
\begin{align}
    \frac12 \mathds{E}_{\mathcal{C}} \left\| \frac{1}{M}\sum_{x\in\mathcal{C}} \rho_B^x - \rho_B \right\|_1 \geq \frac{1}{M} \exp D_2^*\left(\left. \mathds{E}_{\mathcal{C}}\left[\rho_{XB}^{\mathcal{C}}\right] \right\| \mathds{E}_{\mathcal{C}}\left[ \rho_{X}^{\mathcal{C}}\otimes \rho_{B}^{\mathcal{C}} + c^{-1} \rho_{X}^{\mathcal{C}}\otimes \rho_B \right] \right) - c. \label{eq:converse_covering4}
\end{align}

Now, we apply the mutual independence between the codeword to have
\begin{align}
    \mathds{E}_{\mathcal{C}}\left[\rho_{XB}^{\mathcal{C}}\right] &= \rho_{XB}; \\
    \mathds{E}_{\mathcal{C}}\left[ \rho_{X}^{\mathcal{C}}\otimes \rho_{B}^{\mathcal{C}} \right] &= \mathds{E}_{\mathcal{C}}\left[ \frac{1}{M^2} \sum_{m, \bar{m} \in [M]} |x_m\rangle \langle x_m  |\otimes \rho_{{x}_{\bar m} }   \right] \\
    &= \mathds{E}_{\mathcal{C}}\left[ \frac{1}{M^2} \sum_{m \in [M]} |x_m\rangle \langle x_m | \otimes \rho_{{x}_m }   \right] + \mathds{E}_{\mathcal{C}}\left[ \frac{1}{M^2} \sum_{m\neq \bar{m}} |x_m\rangle \langle x_m | \otimes \rho_{{x}_{\bar m} } \right] \\
    &= \frac{1}{M} \rho_{XB} + \left( 1 - \frac{1}{M} \right) \rho_X\otimes \rho_B; \\
    c^{-1} \mathds{E}_{\mathcal{C}}\left[ \rho_{X}^{\mathcal{C}}\otimes \rho_B \right] &= c^{-1} \rho_X \otimes \rho_B.
\end{align}
By putting them together, we obtain
\begin{align}
    \varepsilon &\geq \frac{1}{M} \exp D_2^*\left( \rho_{XB}  \,\|\, M^{-1} \rho_{XB} + \left( 1 + c^{-1} - M^{-1} \right) \rho_{X}\otimes \rho_B \right) - c \\
    &\overset{(a)}{\geq} \frac{\varepsilon+\delta }{M} \left[  M^{-1} + \left( 1 + c^{-1} - M^{-1} \right) \e^{- D_\text{s}^{1-\varepsilon - \delta}(\rho_{XB} \,\|\,\rho_X\otimes \rho_B) } \right]^{-1} - c,
\end{align}
where we apply Lemma~\ref{lemm:D2toDs} in (a) with $\eta = 1 - \varepsilon - \delta$, $\lambda_1 = M^{-1}$, and $\lambda_2 = 1 + c^{-1} - M^{-1}$.
In other words, we get an lower bound on $\log M$ as
\begin{align}
    \log M &\geq D_\text{s}^{1-\varepsilon - \delta}(\rho_{XB} \,\|\,\rho_X\otimes \rho_B) - \log \left(1 + c^{-1} + M^{-1} \right) + \log \frac{\delta-c}{\varepsilon + c}\\
    &\geq D_\text{s}^{1-\varepsilon - \delta}(\rho_{XB} \,\|\,\rho_X\otimes \rho_B) - \log \left(1 + c^{-1}\right) + \log \frac{\delta-c}{\varepsilon + c}\\
    &\geq D_\text{h}^{1-\varepsilon - 2\delta}(\rho_{XB} \,\|\,\rho_X\otimes \rho_B) - \log \left(1 + c^{-1}\right) + \log \frac{\delta-c}{\varepsilon + c} + \log \delta\\
    &\geq I_\text{h}^{1-\varepsilon - 2\delta}(X{\,:\,}B)_\rho - \log \frac{1+c}{c\delta} - \log \frac{\varepsilon + c}{\delta-c},
\end{align}\vspace{-0.5em}
completing the proof.
\end{proof}

\vspace{-2em}
\section{Conclusions} \label{sec:conclusions}
The \emph{large deviation analysis} \cite{DZ98, Hay07, ANS+08, WWY14, MO17, Hao-Chung, CH16, CHT19, CHDH2-2018, Cheng2021a, Cheng2021b, CGH18} of privacy amplification against quantum side information and quantum soft covering has been investigated in previous literature \cite{Ren05, Dup21, LY21a, KL21, CG22, SGC22a}, wherein one fixes the rate or the size of $|\mathcal{Z}|$ and $|\mathcal{C}|$ and studies the optimal errors in terms of the trace distance. Also, some \emph{moderate deviation analysis} \cite{CH17, CTT2017} were studied for characterizing the minimal trace distance while the rates approach the first-order limits with certain speed \cite{CG22, SGC22a}.
In this paper, we took another perspective---what are the optimal rates when the trace distances are upper bounded by a constant $\varepsilon \in (0,1)$. This corresponds to the so-called \emph{small error regime} \cite{Str62, Hay09b, PPV10} or the \emph{non-vanishing error regime} \cite{Tan14}.
We establish the second-order rates for fixed $\varepsilon\in (0,1)$ and establish the optimal rates when trace distances vanishes no faster than $O(\sfrac{1}{\sqrt{n}})$. 

In light of the duality between smooth min- and max-entropies, the purified distance has been recognized as an appropriate distance measure \cite{TCR10, TH13, Tsurumaru2021EquivalenceOT, ABJ+20, KL21}. Our work suggests that if one considers the trace distance as the performance benchmark without going into the smooth entropy framework \cite{WH13, Hay13, TH13, Hay16, HW16, ABJ+20}, the conditional hypothesis testing entropy and the hypothesis testing information\footnote{In Ref.~\cite{TH13}, it was shown that up to second-order terms, $D_\text{h}^{1-\varepsilon}$ scales as the relative entropy version of the smooth min-entropy $D_\text{max}^{\sqrt{\varepsilon}}$. Although the two quantities are asymptotically equivalent, they arise in very different proof methodologies.
} are the natural one-shot characterizations\footnote{
In Ref.~\cite{Dup21}, Dupuis asked is it possible to use the R\'enyi-type entropies/information to characterize operational quantities in one-shot information theory.
We would like to point out that although there are essentially no differences between the three deviation regimes in the one-shot setting, there are at least two different types of operational quantities of interest; their characterizations might be different.
As observed in \cite{Dup21} and our previous works \cite{SGC22a, CG22}, indeed, the R\'enyi-type quantities are more favorable in characterizing the optimal error given a fixed size or cardinality such as $|\mathcal{Z}|$ and $|\mathcal{C}|$ considered in this paper.
On the other hand, if one concerns the size or cardinality given a fixed error, the hypothesis-testing-type quantities or the information-spectrum-type quantities might be more direct for characterizations.
}
. 
An interesting open question is comparison between the conditional hypothesis testing entropy with the partially trace-distance-smoothed min-entropy \cite{ABJ+20}.

\section*{Acknowledgement}
H.-C.~Cheng would like to thank Kai-Min Chung for his insightful discussions, and also thank Marco Tomamichel for helpful comments and pointing out relevant references.
Y.-C.~Shen and H.-C.~Cheng are supported by the Young Scholar Fellowship (Einstein Program) of the Ministry of Science and Technology in Taiwan (R.O.C.) under Grant MOST 110-2636-E-002-009, and are supported by the Yushan Young Scholar Program of the Ministry of Education in Taiwan (R.O.C.) under Grant NTU-110V0904,  Grant NTU-CC-111L894605, and Grand NTU-111L3401.

{\larger
\bibliographystyle{myIEEEtran}
\bibliography{reference.bib}
}

\end{document}